\documentclass[conference,letterpaper]{IEEEtran}
\IEEEoverridecommandlockouts

\usepackage{tikz-cd}
\usepackage{graphicx}
\usepackage[svgnames]{xcolor}
\usepackage{balance}
\usepackage[utf8]{inputenc}
\usepackage[T1]{fontenc}
\usepackage{url}
\usepackage[cmex10]{amsmath} 
\usepackage[url,hyperrefblack,notheorems,IEEEtran]{research17} 
\usepackage{amsmath,amssymb,amsfonts,amsbsy}

\usepackage{soul} 
\usepackage[colorinlistoftodos,textsize=footnotesize]{todonotes} 
\colorlet{darkgreen}{DarkGreen}

\usepackage{amsthm} 
\newtheorem{theorem}{\mytheoremname}
\newtheorem{lemma}{\mylemmaname}
\newtheorem{corollary}{\mycorollaryname}

\newtheorem{proposition}{\mypropositionname}

\newtheorem{definition}{\mydefinitionname}
\newtheorem{remark}{\myremarkname}
\newtheorem{example}{\myexamplename}

\usepackage[capitalize]{cleveref}
\allowdisplaybreaks

\usepackage{booktabs} 
\usepackage{array}    
\usepackage{tabularx} 
\usepackage{multirow}
\usepackage{placeins}
\usepackage{enumitem}
\usepackage[lined,boxed,commentsnumbered,linesnumbered,algoruled]{algorithm2e}
\DontPrintSemicolon
\SetKw{Continue}{continue}

\usepackage{tikz}
\usepackage[caption=false,font=footnotesize]{subfig}
\newcommand{\Qchi}{\chi^{\prime}_{\tn{Q}}}
\newcommand{\Qgap}{\gamma_{\tn{Q}}}
\newcommand{\Aut}{\ensuremath{\operatorname{Aut}}}
\newcommand{\quotient}[2]{\ensuremath{#1/#2}} 

\title{Local Automorphism-Aware Syndrome Compilation for General Quantum LDPC Codes\thanks{\IEEEauthorrefmark{1}The first two authors share first authorship.}\thanks{This work was supported in part by the Research Council of Norway (RCN) under the NISQEC project (grant no.~$357698$) and the Centre for Quantum Communication Networks and Applications (QCNA)   (grant no.~$361582$).}
}
\author{\IEEEauthorblockN{Eugenio Durazo Rocha\IEEEauthorrefmark{1}\IEEEauthorrefmark{2}\IEEEauthorrefmark{3}, Olai \AA.~Mostad\IEEEauthorrefmark{1}\IEEEauthorrefmark{2}\IEEEauthorrefmark{3}, Hsuan-Yin Lin\IEEEauthorrefmark{2}, and Eirik Rosnes\IEEEauthorrefmark{2}}
\IEEEauthorblockA{\IEEEauthorrefmark{2}Simula UiB, N--5006 Bergen, Norway\\
\IEEEauthorrefmark{3}Department of Informatics, University of Bergen, N--5020 Bergen, Norway\\
Emails: \{eugenio, olai, lin, eirikrosnes\}@simula.no}}

\begin{document}
\bstctlcite{IEEEtran:default} 
\maketitle

\begin{abstract}
  Low-depth syndrome extraction for Calderbank--Shor--Steane (CSS) quantum low-density parity-check codes can be formulated as a \emph{proper ordered edge-coloring} problem subject to quantum parity constraints. A proper edge-coloring of the CSS Tanner graph ensures that each data or ancilla qubit participates in at most one two-qubit gate per layer, but does not guarantee a valid interleaving of the $\textnormal{X}$- and $\textnormal{Z}$-check measurements as for every overlapping $\textnormal{X}/\textnormal{Z}$ check pair, the number of shared data qubits on which the $\textnormal{X}$ interaction precedes the $\textnormal{Z}$ interaction must be even. The minimum number of colors in a proper ordered edge-coloring satisfying the quantum parity constraints equals the minimum two-qubit depth when each stabilizer check is measured with a single ancilla.
  
  We introduce \emph{local automorphism-aware syndrome compilation (LocalASC)}, which reduces the constraint system to edge-orbit variables under a subgroup of the Tanner graph automorphisms and lifts each feasible orbit assignment to the full graph. Although the $6$-layer degree lower bound is unattainable for the published weight-$6$ IBM bivariate bicycle codes, we show that this is not universal among two-block CSS codes. Among code instances for which the maximum check weight equals the maximum Tanner graph degree, LocalASC finds depth-optimal syndrome-extraction schedules for several two-block CSS codes with odd component weights, including instances with unequal odd weights. We also obtain lower-bound-saturating syndrome-extraction schedules for several quantum Tanner codes satisfying the same degree condition. To obtain the subgroups used by LocalASC without computing the full automorphism group of the Tanner graph, we construct translation subgroups for two-block group-algebra CSS codes over abelian groups. For quantum Tanner codes, we give conditions under which square-complex symmetries extend to Tanner graph automorphisms.
\end{abstract}

\section{Introduction}
\label{sec:introduction}

Low-depth syndrome extraction is a central implementation problem for quantum low-density parity-check (qLDPC) codes. Strikis \emph{et al.} developed a high-performance syndrome-extraction framework based on left-right circuits, extended this noninterleaving construction to arbitrary Calderbank--Shor--Steane (CSS) codes~\cite{CalderbankShor96_1, Steane96_1}, and optimized both qubit idling time and circuit-level effective distance~\cite{StrikisBrowneBeverland26_1sub}. Their framework maintains low depth by staggering the measurements of $\tn{X}$- and $\tn{Z}$-checks without interleaving their gates. For a stabilizer code~\cite{Gottesman97_1, CalderbankRainsShorSloane98_1} with a fixed set of measured stabilizer generators, let $\graph{G}$ denote the corresponding Tanner graph, let $\Delta(\graph{G})$ denote its maximum degree, and let $\lambda^\star_{\tn{SEC}}$ denote the minimum two-qubit depth over all valid syndrome-extraction circuits for these generators. Since the two-qubit gates incident on any Tanner graph vertex cannot be executed simultaneously, they must occupy distinct circuit layers. Hence, $\lambda^\star_{\tn{SEC}}\geq\Delta(\graph{G})$. This bound is not always attainable because $\tn{C}\mat{X}$ and $\tn{C}\mat{Z}$ gates for overlapping $\tn{X}$- and $\tn{Z}$-checks cannot be ordered arbitrarily. Zhang \emph{et al.} incorporated these ordering constraints into the \emph{Auto-Stabilizer-Check (ASC)} compiler~\cite{Zhang-etal26_1sub}. For the bivariate bicycle (BB) codes of Bravyi \emph{et al.}, ASC proves that depth $6$ is unattainable even though $\Delta(\graph{G})=6$~\cite{Bravyi-etal24_1, Zhang-etal26_1sub}.

We recast the syndrome extraction problem as \emph{quantum-constrained edge coloring (QCEC)}, from a graph-theoretic perspective. Each Tanner graph edge is assigned a color that specifies its two-qubit circuit layer. Proper edge coloring prevents gate collisions, while the quantum parity constraints enforce a valid ordering for every overlapping $\tn{X}/\tn{Z}$ check pair. Given a CSS code with a fixed pair of parity-check matrices (PCMs)
$(\mat{H}_{\tn{X}},\mat{H}_{\tn{Z}})$ for stabilizer measurements, the minimum number of colors defines the \emph{quantum-constrained edge chromatic number} $\Qchi(\mat{H}_{\tn{X}},\mat{H}_{\tn{Z}})$. This number depends on the chosen PCMs, not only on the stabilizer group they generate (see Theorem~\ref{thm:qcec-depth}).


Furthermore, we introduce \emph{local automorphism-aware syndrome compilation (LocalASC)}. Let $\Aut(\graph{G})$ denote the group of CSS Tanner graph automorphisms preserving the $\tn{X}$-check, $\tn{Z}$-check, and data-qubit vertex classes. For $\set{H}\leq\Aut(\graph{G})$, LocalASC assigns one tick variable to each edge orbit while retaining all QCEC constraints on the full Tanner graph. By Theorem~\ref{thm:orbit-invariant-schedule-lifting}, each feasible orbit assignment lifts to an $\set{H}$-invariant ASC-feasible schedule, and every such schedule determines a unique orbit assignment. The method applies to any chosen subgroup $\set{H}\leq\Aut(\graph{G})$, whether found using a generic graph-automorphism routine or supplied by the code construction. Algorithm~\ref{alg:LocalASC} summarizes the complete procedure and independently verifies that the lifted schedule satisfies all QCEC constraints on the full Tanner graph.

\begin{table}[t!]
  \caption{LocalASC versus ASC at depth $w=\Delta(\graph{G})$. Both methods were run with a $30$ minutes time limit for each instance. Runtimes are averaged over $10$ runs, and instances exceeding the time limit were not repeated. Bold code parameters mark instances for which ASC found no schedule within the time limit.}
  \vspace{-3ex}
  \label{tab:LocalASC-vs-ASC_depth6}
  \begin{center}
    \begingroup
    \setlength{\tabcolsep}{3pt}
    \begin{tabular}{@{}lccccl@{}}
      \toprule
      $[[n,k,d]]$
      & $\lambda^\star_{\tn{SEC}}$
      & LocalASC
      & $\ecard{\set{H}^{\star}}$
      & ASC
      & References
      \\
      \midrule
      \multicolumn{6}{l}{\itshape Two-block codes, $(w_{\tn{A}},w_{\tn{B}})=(3,3)$}
      \\
      $[[12,4,2]]$                 & 6 & 45\,ms  & 3  & $<0.1$\,s  & \cite{LiangLiuSongChen25_1} \\
      $[[24,4,4]]$                 & 6 & 25\,ms  & 3  & 0.3\,s     & \cite{LiangLiuSongChen25_1} \\
      $[[28,6,4]]$                 & 6 & 12\,ms  & 7  & 7\,s       & \cite{LiangLiuSongChen25_1} \\
      $[[36,4,4]]$                 & 6 & 38\,ms  & 3  & 15\,s      & \cite{PostemaKokkelmans25_1sub} \\
      $[[36,4,6]]$                 & 6 & 66\,ms  & 3  & 12\,s      & \cite{Wang-etal26_1} \\
      $[[48,4,8]]$                 & 6 & 54\,ms  & 3  & 33\,s      & \cite{LiangLiuSongChen25_1} \\
      $[[48,4,8]]$                 & 6 & 73\,ms  & 3  & 37\,s      & \cite{AydinTamoBarg26_1sub} \\
      $\bm{[[108,12,6]]}$      & 6 & 769\,ms & 27 & $>30$\,min & \cite{WangMueller24_1sub} \\
      \midrule
      \multicolumn{6}{l}{\itshape Two-block codes, $(w_{\tn{A}},w_{\tn{B}})=(3,5)$}
      \\
      $\bm{[[84,14,10]]}$      & 8 & 43\,ms  & 21 & $>30$\,min & \cite{LinPryadko24_1} \\
      $\bm{[[96,12,10]]}$      & 8 & 10\,s   & 12 & $>30$\,min & \cite{LinPryadko24_1} \\
      $\bm{[[112,12,12]]}$     & 8 & 11\,s   & 14 & $>30$\,min & \cite{LinPryadko24_1} \\
      $\bm{[[120,8,8]]}$       & 8 & 400\,ms & 15 & $>30$\,min & \cite{LinPryadko24_1} \\
      \midrule
      \multicolumn{6}{l}{\itshape Quantum Tanner codes}
      \\
      $[[45,7,4]]$                & 6 & 49\,ms  & 5  & 150\,s     & \cite{MostadRosnesLin26_1sub, nisqec26_1} \\
    $\bm{[[180,26,6]]}$     & 6 & 549\,ms & 5  & $>30$\,min & \cite{MostadRosnesLin25_1, nisqec26_1} \\
    $[[325,14,9]]$              & 9 & 185\,ms & 13 & 548\,s     & \cite{MostadRosnesLin26_1sub, nisqec26_1} \\
      $\bm{[[666,4,d\geq13]]}^a$ & 9 & 3.1\,s  & 37 & $>30$\,min & \cite{MostadRosnesLin26_1sub, nisqec26_1} \\
      $\bm{[[666,8,d\geq13]]}^a$ & 9 & 1.4\,s  & 37 & $>30$\,min & \cite{MostadRosnesLin26_1sub, nisqec26_1} \\
      $[[1225,225,6]]$            & 12& 640\,ms & 25 & 566\,s     & \cite{MostadRosnesLin26_1sub, nisqec26_1} \\
      \bottomrule
    \end{tabular}
    \endgroup
  \end{center}
  \footnotesize{$^a$The minimum distance lower bound was calculated using the classical algorithm in~\cite{RosnesYtrehus09_1, RosnesYtrehusAmbrozeTomlinson12_1} as outlined in \cite[Sec.~V]{MostadRosnesLin25_1}.}
  \vspace{-4ex}
\end{table}

For a fixed pair of measured CSS PCMs, let $w$ denote the maximum check weight. Table~\ref{tab:LocalASC-vs-ASC_depth6} compares LocalASC with ASC on selected two-block CSS codes~\cite{LinPryadko24_1, Bravyi-etal24_1, AydinTamoBarg26_1sub} and quantum Tanner codes~\cite{LeverrierZemor22_1, LeverrierZemor23_1, LeverrierRozendaalZemor25_1sub} satisfying $w=\Delta(\graph{G})$. The LocalASC and ASC columns report average runtimes, measured as described in Section~\ref{sec:numerical-results}, while $\ecard{\set{H}^{\star}}$ denotes the order of the subgroup used to obtain the corresponding schedule. For every completed instance in the table, LocalASC effectively finds a syndrome-extraction schedule of depth $w$, thereby attaining the Tanner graph degree lower bound and establishing
$\lambda^\star_{\tn{SEC}}=w=\Delta(\graph{G})$.

For both the two-block and quantum Tanner code instances, we obtained the automorphism subgroups using the same generic graph-automorphism routine, without using their construction-specific algebraic structure. The average LocalASC runtime is at most $11$ seconds for every instance in Table~\ref{tab:LocalASC-vs-ASC_depth6}. For eight instances, including the $[[108,12,6]]$ two-block code and the $[[180,26,6]]$ quantum Tanner code, ASC finds no schedule within the $30$ minutes time limit. 
We also applied LocalASC to other PCMs, including quantum Tanner codes from~\cite{LeverrierRozendaalZemor25_1sub, WangLiuLiKubicaGu26_1sub}. These instances are not listed in Table~\ref{tab:LocalASC-vs-ASC_depth6}. For many of them, including several with $w\neq\Delta(\graph{G})$, LocalASC finds a schedule of depth $\Delta(\graph{G})$, which is therefore depth-optimal. For instance, the $[[1225,32,d\geq 13]]$ quantum Tanner code constructed from the cyclic group $\set{C}_{50}$ (of order $50$)~\cite{MostadRosnesLin26_1sub, nisqec26_1} has $\Delta(\graph{G})=14>w=12$.\footnote{The minimum distance lower bound was calculated using the classical algorithm in~\cite{RosnesYtrehus09_1, RosnesYtrehusAmbrozeTomlinson12_1} as outlined in~\cite[Sec.~V]{MostadRosnesLin25_1}.} LocalASC finds a depth-$14$ schedule in $309$ seconds on average, whereas ASC finds no schedule within the $30$ minutes time limit.

Our main application concerns general qLDPC codes. LocalASC identifies multiple code instances for which the lower bound on the Tanner graph degree is attainable. For two-block CSS codes with unequal odd component weights $w_{\tn{A}}$ and $w_{\tn{B}}$, the general coset-based construction of Aydin, Tamo, and Barg~\cite{AydinTamoBarg26_1sub} and the sandwich construction of Nguyen \emph{et al.}~\cite{NguyenRimbach-RussBosco26_1sub} both yield depth-$(w_{\tn{A}}+w_{\tn{B}}+1)$ schedules. For the instances identified here, LocalASC instead finds depth-$(w_{\tn{A}}+w_{\tn{B}})$ schedules, attaining the degree lower bound and therefore the optimal SEC depth. For quantum Tanner codes, LocalASC identifies additional medium-size code instances with previously unreported depth-optimal syndrome-extraction schedules beyond those considered by Nguyen \emph{et al.}~\cite{NguyenRimbach-RussBosco26_1sub}.

Independent of this work, Nguyen \emph{et al.} proposed a construction-based approach that partitions the Tanner graph edges according to the algebraic description of the code, solves the resulting reduced scheduling problem, and lifts the assignment to the full syndrome-extraction circuit~\cite{NguyenRimbach-RussBosco26_1sub}. They derive analytical schedules with depths close to the degree lower bound for lifted-product and balanced-product codes and report depth-optimal schedules for the quantum Tanner code instances considered in~\cite{LeverrierRozendaalZemor25_1sub}. Both their approach and LocalASC reduce the number of tick variables and lift a reduced assignment to the full Tanner graph, but the two reductions are different. Their edge partitions are derived from the specific constructions of these three code families and need not be edge orbits under a subgroup of $\Aut(\graph{G})$. Their analytical schedules also impose sufficient conditions, including termwise parity conditions and equalities between conjugate edge classes, that can be stronger than the original ASC ordering constraints~\cite[Eqs.~(11) and (21)]{NguyenRimbach-RussBosco26_1sub}. For any CSS code with a fixed pair of PCMs for stabilizer measurements, QCEC instead characterizes the minimum two-qubit depth without assuming a particular code family. LocalASC restricts QCEC to schedules invariant under one of the selected subgroups $\set{H}_1,\ldots,\set{H}_{\Gamma}\leq\Aut(\graph{G})$ by identifying the edge ticks within each orbit and retaining every original ASC constraint. These subgroups may be computed using generic graph-automorphism routines or obtained from the code construction.

For instance, consider the fixed pair of measured CSS PCMs for the $[[12,4,2]]$ code listed in Table~\ref{tab:LocalASC-vs-ASC_depth6}. The analytical construction of Nguyen \emph{et al.} gives a depth-$7$ schedule that satisfies their termwise parity and conjugate-edge conditions. For the same measured CSS PCMs, LocalASC uses an order-$3$ automorphism subgroup, yielding $24$ edge orbits with one tick variable assigned to each orbit. The resulting orbit model admits a valid depth-$6$ schedule. We were unaware of~\cite{NguyenRimbach-RussBosco26_1sub} while developing the present work.

Finally, Section~\ref{sec:constructing-automorphisms} presents construction-based automorphism subgroups for LocalASC and schedules. Algorithm~\ref{alg:constructH_abelian-2BGA} constructs and verifies translation subgroups for two-block group-algebra CSS codes over abelian groups. For quantum Tanner codes, Theorems~\ref{thm:aut-QT} and~\ref{thm:autQT-ab-Cay} describe Tanner graph automorphisms arising from square-complex symmetries. Theorem~\ref{thm:achievable-depth_qTanner-local-codes} gives sufficient conditions for quantum-valid ordered colorings of the local CSS codes to lift to a schedule satisfying the QCEC constraints on the full Tanner graph.

\section{Notation and Preliminaries}
\label{sec:preliminaries}

\subsection{Notation}
\label{sec:notation}

In this paper, vectors are denoted by bold lowercase letters, matrices by sans serif uppercase letters, sets and groups by calligraphic uppercase letters, and graphs and codes by script uppercase letters; for example, $\vect{a}$, $\mat{A}$, $\set{A}$, and $\graph{G}$, respectively. The identity element of a group is denoted by $1$, the identity map by $\operatorname{id}$, and $e$ is reserved for an edge of a graph. We write $\set{C}_{\ell}$ for a cyclic group of order $\ell$ and use additive notation for its elements when convenient. For a subset $\set{T}$ of a group, $\egen{\set{T}}$ denotes the subgroup generated by $\set{T}$. In particular, $\egen{g}$ denotes the cyclic subgroup generated by an element $g$. For a graph $\graph{G}$, $\set{V}(\graph{G})$ and $\set{E}(\graph{G})$ denote the vertex and edge sets, respectively. For $v\in\set{V}(\graph{G})$, $\set{N}(v)$ denotes the set of vertices adjacent to $v$, and $\delta(v)$ denotes the set of edges incident on $v$. For integers $a,b\in\Naturals\cup\{0\}$ with $a\le b$, let $[a:b]\eqdef\{a,a+1,\ldots,b\}$. For a statement, its indicator function is defined as $\eI{\tn{statement}}\eqdef 1$ if the statement is true and $0$ otherwise.

\subsection{Quantum CSS Codes}
\label{sec:quantum-css-codes}

Let $\mat{H}_{\tn{X}}\in\Field_2^{m_{\tn{X}}\times n}$ and $\mat{H}_{\tn{Z}}\in\Field_2^{m_{\tn{Z}}\times n}$ be the CSS PCMs satisfying
\begin{equation}
  \mat{H}_{\tn{X}}\trans{\mat{H}_{\tn{Z}}}=0.
  \label{eq:cssorth}
\end{equation}
The rows of $\mat{H}_{\tn{X}}$ and $\mat{H}_{\tn{Z}}$ specify the $\tn{X}$- and $\tn{Z}$-type checks that are measured. For each CSS code, we fix a pair of PCMs for stabilizer measurements because the same code may admit different generator sets, including redundant low-weight checks.

\begin{definition}[CSS Tanner graph]
  \label{def:CSS-Tanner-graph}
  The CSS Tanner graph $\graph{G}(\mat{H}_{\tn{X}},\mat{H}_{\tn{Z}})$ has vertex partition $\set{V}(\graph{G})=\set{V}_{\tn{X}}\sqcup\set{V}_{\tn{D}}\sqcup\set{V}_{\tn{Z}}$, where $\set{V}_{\tn{D}}=\{q_1,\ldots,q_n\}$ is the set of data vertices, and $\set{V}_{\tn{X}}$ and $\set{V}_{\tn{Z}}$ are the sets of $\tn{X}$- and $\tn{Z}$-check vertices, respectively. An edge $(x_i,q_j)$ is present if and only if ${(\mat{H}_{\tn{X}})}_{i,j}=1$, while an edge $(z_i,q_j)$ is present if and only if ${(\mat{H}_{\tn{Z}})}_{i,j}=1$.
  We denote the maximum degree of $\graph{G}$ by $\Delta(\graph{G})$.
\end{definition}

\subsection{Coset-Based Two-Block CSS Codes}
\label{sec:two-block}

A coset-based two-block code is specified by a finite group $\set{G}$, a subgroup $\set{K}\leq\set{G}$, and two group-algebra elements. Following~\cite[Secs.~II and III]{AydinTamoBarg26_1sub}, let
\begin{equation*}
  \set{N}_{\set{G}}(\set{K})\eqdef\{g\in\set{G}:g\set{K}=\set{K}g\}
\end{equation*}
be the normalizer of $\set{K}$ in $\set{G}$, and let $m=[\set{G}:\set{K}]$. The group $\set{G}$ acts from the left on the coset space $\quotient{\set{G}}{\set{K}}$, while $\set{N}_{\set{G}}(\set{K})$ acts from the right, according to
\begin{equation*}
  \mat{L}(g)(x\set{K})\eqdef(gx)\set{K},
  \qquad
  \mat{R}(h)(x\set{K})\eqdef(xh)\set{K}.
\end{equation*}
The right action is well defined because $h\in\set{N}_{\set{G}}(\set{K})$. Fix an ordering $\quotient{\set{G}}{\set{K}}=\{x_1\set{K},\ldots,x_m\set{K}\}$. The corresponding $m\times m$ permutation matrices are
\begin{equation*}
  [\mat{L}(g)]_{i,j}=\eI{\mat{L}(g)(x_j\set{K})=x_i\set{K}},
\end{equation*}
and
\begin{equation*}
  [\mat{R}(h)]_{i,j}=\eI{\mat{R}(h)(x_j\set{K})=x_i\set{K}}.
\end{equation*}
Let
\begin{equation*}
  a = \sum_{g\in\set{G}}a_g g\in\Field_2[\set{G}],
  \quad
  b = \sum_{h\in\set{N}_{\set{G}}(\set{K})}b_h h\in\Field_2[\set{N}_{\set{G}}(\set{K})],
\end{equation*}
Their supports are defined as $\supp(a)\eqdef\{g\in\set{G}:a_g\neq0\}$ and $\supp(b)\eqdef\{h\in\set{N}_{\set{G}}(\set{K}):b_h\neq0\}$, respectively. Extending the actions linearly, set
\begin{equation*}
  \mat{A}=\mat{L}(a)=\sum_{g\in\set{G}}a_g\mat{L}(g),
  \quad
  \mat{B}=\mat{R}(b)=\sum_{h\in\set{N}_{\set{G}}(\set{K})}b_h\mat{R}(h).
\end{equation*}
The left and right coset actions commute. Therefore, the binary coset-based two-block code $\set{Q}_{\set{G}}^{\set{K}}(a,b)$ with check matrices
\begin{equation}
  \mat{H}_{\tn{X}}=[\mat{A}\mid\mat{B}],
  \quad
  \mat{H}_{\tn{Z}}=[\trans{\mat{B}}\mid\trans{\mat{A}}]
  \label{eq:Hz-Hx_two-block-codes}
\end{equation}
has length $n=2m$ and satisfies $\mat{H}_{\tn{X}}\trans{\mat{H}_{\tn{Z}}}=0$ over $\Field_2$. When $\set{K}$ is normal, the construction reduces to a two-block group-algebra (2BGA) code over $\set{G}/\set{K}$. In particular, $\set{K}=\{1\}$ gives the usual 2BGA construction over $\set{G}$~\cite{LinPryadko24_1}. If the permutation matrices selected by $a$ and $b$ are distinct within their respective blocks, then the check weight is $w=\ecard{\supp(a)}+\ecard{\supp(b)}$.

\begin{example}[A {$[[6,2,2]]$} Code]
  \label{ex:622}
  Consider the binary coset-based two-block code with trivial subgroup $\set{K}=\{1\}$, cyclic group $\set{G}=\set{C}_3\eqdef\egen{r\mid r^3=1}$, and $a=b=1+r\in\Field_2[\set{C}_3]$. Here, $\set{N}_{\set{G}}(\set{K})=\set{G}$ and $\set{G}/\set{K}\cong\set{G}$, so this instance is the usual 2BGA code. Under the standard identification $\Field_2[\set{C}_3]\cong\Field_2[x]/(x^3-1)$, it can also be seen as a generalized bicycle (GB) code, denoted by $\tn{GB}[1+x,1+x]$~\cite[Sec.~IV-E]{LinPryadko24_1}. Order the cosets by the representatives $1,r,r^2$. The action of $r$ is represented by a cyclic permutation matrix, as defined in~\cite[Eq.~(43)]{LinPryadko24_1},
  \begin{equation*}
    \mat{P}=
    \begin{pmatrix}
      0&0&1\\
      1&0&0\\
      0&1&0
    \end{pmatrix}.
  \end{equation*}
  Consequently,
  \begin{equation*}
    \mat{A}=\mat{L}(a)=\mat{B}=\mat{R}(b)=\mat{I}+\mat{P}=
    \begin{pmatrix}
      1&0&1\\
      1&1&0\\
      0&1&1
    \end{pmatrix},
  \end{equation*}
  where $\mat{L}(a)=\mat{R}(a)$ because $\set{C}_3$ is abelian. This gives $\mat{H}_{\tn{X}}$ and $\mat{H}_{\tn{Z}}$ based on~\eqref{eq:Hz-Hx_two-block-codes}.
  
  The polynomial $h(x)=\gcd(1+x,1+x,x^3-1)=1+x$ has degree $1$, so the cyclic 2BGA dimension formula~\cite[Eq.~(44)]{LinPryadko24_1} gives $k=2\deg h=2$. Equivalently, both PCMs have rank $2$. A minimum-weight logical-operator search gives $d=2$. Hence, the code has parameters $[[6,2,2]]$. Its CSS Tanner graph has $\tn{X}$-check vertices $x_0,x_1,x_2$, $\tn{Z}$-check vertices $z_0,z_1,z_2$, and data vertices $A_0,A_1,A_2,B_0,B_1,B_2$.\footnote{Here, $A_i=q_{i+1}$ and $B_i=q_{i+4}$ for $i=0,1,2$, corresponding to the first and second column blocks of the CSS PCMs, respectively.} This graph has $24$ edges, is $4$-regular, and is illustrated in Fig.~\ref{fig:tanner622-uncolored}.
\end{example}

\begin{figure}[t!]
  \centering
  \resizebox{0.95\columnwidth}{!}{%
    \begin{tikzpicture}[every node/.style={font=\scriptsize}]
      \coordinate (X0) at (0,0); \coordinate (X1) at (0,-0.9); \coordinate (X2) at (0,-1.8);
      \coordinate (Z0) at (8,0); \coordinate (Z1) at (8,-0.9); \coordinate (Z2) at (8,-1.8);
      \coordinate (Q0) at (4,0.225); \coordinate (Q1) at (4,-0.225); \coordinate (Q2) at (4,-0.675);
      \coordinate (Q3) at (4,-1.125); \coordinate (Q4) at (4,-1.575); \coordinate (Q5) at (4,-2.025);
      \begin{scope}[black,line width=0.7pt]
        \draw (X0)--(Q0); \draw (X0)--(Q2); \draw (X0)--(Q3); \draw (X0)--(Q5);
        \draw (X1)--(Q0); \draw (X1)--(Q1); \draw (X1)--(Q3); \draw (X1)--(Q4);
        \draw (X2)--(Q1); \draw (X2)--(Q2); \draw (X2)--(Q4); \draw (X2)--(Q5);
        \draw[densely dashed] (Z0)--(Q0); \draw[densely dashed] (Z0)--(Q1); \draw[densely dashed] (Z0)--(Q3); \draw[densely dashed] (Z0)--(Q4);
        \draw[densely dashed] (Z1)--(Q1); \draw[densely dashed] (Z1)--(Q2); \draw[densely dashed] (Z1)--(Q4); \draw[densely dashed] (Z1)--(Q5);
        \draw[densely dashed] (Z2)--(Q0); \draw[densely dashed] (Z2)--(Q2); \draw[densely dashed] (Z2)--(Q3); \draw[densely dashed] (Z2)--(Q5);
      \end{scope}
      \foreach \i in {0,1,2}{
        \node[rectangle,draw,fill=blue!12,minimum width=16pt,minimum height=13pt,inner sep=0pt] at (X\i) {$x_\i$};
        \node[rectangle,draw,fill=red!12,minimum width=16pt,minimum height=13pt,inner sep=0pt] at (Z\i) {$z_\i$};
      }
      \foreach \i/\q in {0/A_0,1/A_1,2/A_2,3/B_0,4/B_1,5/B_2}
      \node[circle,draw,fill=gray!12,minimum size=13pt,inner sep=0pt] at (Q\i) {$\q$};
      \node[above] at (0,0.55) {$\tn{X}$ checks}; \node[above] at (4,0.75) {data qubits}; \node[above] at (8,0.55) {$\tn{Z}$ checks};
    \end{tikzpicture}%
  }
  \caption{The standard CSS Tanner graph of the $[[6,2,2]]$ code in Example~\ref{ex:622}. Solid and dashed edges are incident on $\tn{X}$- and $\tn{Z}$-check vertices, respectively.}
  \label{fig:tanner622-uncolored}
\end{figure}

\subsection{Quantum Tanner Codes}
\label{sec:prelim-QT}

Quantum Tanner codes are defined by a pair of regular graphs fitting together in a certain way and a pair of classical codes of length equal to the degree of the graphs. We rephrase the original definition from \cite{LeverrierZemor22_1} to better suit our needs, and refer the reader to~\cite{MostadRosnesLin25_1, MostadRosnesLin25_2} for generalizations for which the results of \cref{sec:qTan-aut} concerning Tanner graph automorphisms hold or can be adapted to, respectively.

First, recall that the \textit{incidence graph} $\graph{{I}(\graph{G})}$ of a graph $\graph{G}=(\set{V}, \set{E})$ is the bipartite graph with vertices $\set{V}\sqcup\set{E}$ and edges $(v, e)$  for $v\in \set{V}, e\in \set{E}$ whenever $v\in e$. If the graph $\graph{G}$ has labeled local views, that is, for each vertex $v\in\set{V}$, the set of adjacent edges is ordered locally by a $\mu_v : \delta(v) \cong [1 : |\delta(v)|]$, then there is a natural labeling/coloring on the edges of $\graph{I(G)}$, namely, $\mu(v, e) = \mu_v(e)$.

We will define the Tanner code of a base graph and a local code through its Tanner graph. Let $\graph{G}^{\mat{H}}=(\set{V}_{\tn{c}}^\mat{H}\sqcup \set{V}_{\tn{d}}^\mat{H}, \set{E}^\mat{H})$ be a bipartite graph with $|\set{V}_\tn{d}^\mat{H}| = l$, and let $\graph{B}=(\set{V}_\tn{c}\sqcup\set{V}_\tn{d},\set{E})$ be an $(l, s)$-biregular graph with edges labeled/colored by $\mu:\set{E}\to \set{V}_{\tn{d}}^\mat{H}$ such that $\mu|_{\delta(v)}$ is an isomorphism for all $v\in \set{V}_c$.
Define the graph 
$$
\graph{G} = \left((\set{V}_\tn{c}\times \set{V}_\tn{c}^\mat{H})\sqcup \set{V}_\tn{d}, \set{R}\right)
$$
with edges
\begin{equation*}
  \set{R} = \{((v, t), q) : (v, q)\in \set{E} \tn{ and } (\mu(v, q), t) \in \set{E}^\mat{H} \}.
\end{equation*}  
Given orderings on the vertex sets, we can view $\graph{G}^\mat{H}$ as the Tanner graph of a classical code $\code C^\mat{H}$ with $|\set{V}_\tn{c}^{\mat{H}}| \times |\set{V}_\tn{d}^{\mat{H}}|$ parity-check matrix $\mat{H}$, and $\graph{G}$ as the Tanner graph of a classical code $\code{C}$. In that case, $\code C$ is called the \textit{Tanner code} with base graph $\graph{G}$ and local code $\code C^\mat{H}$, and we denote the parity-check matrix of $\code{C}$ by $\tn{Tan}(\graph{B}, \mat{H})$.


Let $\set{G}$ be a group with subsets $\set{A}, \set{B}$ satisfying $\inv{a}\in\set{A}$, $\inv{b}\in\set{B}$, and $agb \neq g$ for all $a \in \set{A}$, $g\in \set{G}$, and $b \in \set{B}$. 
We define $\set{X} = (\set{V}, \set{E}, \set{Q})$ as the following square complex. $\set{X}$ has vertices $\set{V} = \set{V}_0\sqcup\set{V}_1$ where $\set{V}_i = \set{G}\times \{i\}$, and edges $\set{E} = \set{E}_\tn{A}\sqcup\set{E}_\tn{B}$ where
\begin{IEEEeqnarray*}{rCl}
  \set{E}_\tn{A} & = &\{((g,0), (ag,1))\colon a \in \set{A}, g\in \set{G}\},
  \\
  \set{E}_\tn{B} & = &\{((g,0), (gb,1))\colon b \in \set{B}, g\in \set{G}\}.
\end{IEEEeqnarray*}
Equivalently, $(\set{V}, \set{E}_\tn{A})$ is the double cover of the left Cayley graph $\tn{Cay}_l(\set{G}, \set{A})$ and $(\set{V}, \set{E}_\tn{B})$ is the double cover of the right Cayley graph $\tn{Cay}_r(\set{G}, \set{B})$. The squares of $\set{X}$ are
\begin{equation*}
  \set{Q} = \{ ((g,0), (ag, 1), (gb, 1), (agb, 0))\colon a \in \set{A}, g\in \set{G}, b \in \set{B} \}.
\end{equation*}
From $\set{X}$ we define two graphs of (diagonals of) squares, $\graph{G}^\square_i = (\set{V}_i, \set{Q})$ for $i=0,1$, where squares are viewed as edges between the vertices on one of their diagonals.
Note that the square $q = ((g,0), (ag, 1), (gb, 1), (agb, 0))$ is labeled by $\set{A}\times \set{B}$ in a natural way in its four corners, namely
\begin{IEEEeqnarray*}{rClrCl}
  \mu_{(g,0)}(q)& = &(a,b),& \quad \mu_{(ag,1)}(q)& = &(\inv{a},b),
  \\
  \mu_{(gb,1)}(q)& = &(a, \inv{b}),& \quad \mu_{(agb,0)}(q)& = &(\inv{a}, \inv{b}).
\end{IEEEeqnarray*}

Let $\mat{H}_\tn{A}, \mat{H}_\tn{B}, \mat{G}_\tn{A}, \mat{G}_\tn{B}$ be parity-check and generator matrices for two classical codes
\begin{equation*}
  \code C_\tn{A} = \tn{ker}(\mat{H}_\tn{A}) = \tn{im}(\trans{\mat{G}}_\tn{A}),
  \quad
  \code C_\tn{B} = \tn{ker}(\mat{H}_\tn{B}) = \tn{im}(\trans{\mat{G}}_\tn{B})
\end{equation*}
of length $\ecard{\set{A}}$ and $\ecard{\set{B}}$, respectively. The quantum Tanner code on the above data is then defined as
\begin{equation*}
  \mat{H}_{\tn{X}} = \tn{Tan}(\graph{I}(\graph{G}^\square_0), \mat{G}_\tn{A} \otimes \mat{G}_\tn{B}), \ \ \
  \mat{H}_{\tn{Z}} = \tn{Tan}(\graph{I}(\graph{G}^\square_1), \mat{H}_\tn{A} \otimes \mat{H}_\tn{B}).
\end{equation*}

\section{Quantum Ordering Constraint}
\label{sec:quantum-ordering}

There are $m_{\tn{X}}$ $\tn{X}$-check ancillas and $m_{\tn{Z}}$ $\tn{Z}$-check ancillas, one for each measured stabilizer generator, together with $n$ data qubits.

We use the $\tn{C}\mat{X}$/$\tn{C}\mat{Z}$ formulation following ASC~\cite{Zhang-etal26_1sub}. An $\tn{X}$-check ancilla $a_i$ interacts with a data qubit $q$ through $(\tn{C}\mat{X})_{a_i\to q}$, while a $\tn{Z}$-check ancilla $b_j$ interacts with $q$ through $(\tn{C}\mat{Z})_{b_j,q}$. The arrow in $(\tn{C}\mat{X})_{a\to q}$ specifies that the $\tn{X}$-check ancilla $a$ controls the data qubit $q$. Since $\tn{C}\mat{Z}$ is symmetric,  its subscript lists the two qubits on which it acts. The $\tn{C}\mat{Z}$ interaction can be written as a $\tn{C}\mat{X}$ interaction conjugated by Hadamard gates on the data qubit:
\begin{equation*}
  (\tn{C}\mat{Z})_{b_j,q}=(\mat{I}_{b_j}\otimes\mat{H}_q)
  (\tn{C}\mat{X})_{b_j\to q}(\mat{I}_{b_j}\otimes\mat{H}_q).
\end{equation*}
Thus, replacing $\tn{C}\mat{X}$ by $\tn{C}\mat{Z}$ only changes the single-qubit basis of the data qubit before and after the interaction; it preserves the Tanner incidence and the two-qubit scheduling constraints. In the standard single-ancilla implementation, each Tanner edge corresponds to $1$ two-qubit interaction.

For a pair of checks $(x_i,z_j)$, we define the shared-support set
\begin{equation*}
  \set{S}_{ij}=\set{N}(x_i)\cap \set{N}(z_j).
\end{equation*}
This tick-independent set contains the data qubits adjacent to both checks. It follows from \eqref{eq:cssorth} that $\ecard{\set{S}_{ij}}$ is even.

\begin{lemma}[{$\tn{X}$--$\tn{Z}$ commutativity~\cite[Prop.~1]{Zhang-etal26_1sub}}]
  \label{lem:crossed-ordering}
  For distinct ancillas $a$ and $b$ coupled to a common data qubit $q$, the single-crossing identity is
  \begin{equation*}
    \bigl(\tn{C}\mat{Z}\bigr)_{b,q}\bigl(\tn{C}\mat{X}\bigr)_{a\to q}
    =\bigl(\tn{C}\mat{Z}\bigr)_{a,b}\bigl(\tn{C}\mat{X}\bigr)_{a\to q}\bigl(\tn{C}\mat{Z}\bigr)_{b,q}.
  \end{equation*}
  Consequently, reversing the relative order of one $\tn{X}$-type and one $\tn{Z}$-type interaction introduces an additional $\bigl(\tn{C}\mat{Z}\bigr)_{a,b}$.
\end{lemma}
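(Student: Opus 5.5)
The identity is an operator equation on three qubits: the X-ancilla $a$, the Z-ancilla $b$, and the data qubit $q$. All gates involved are Clifford, and in fact they lie in the group generated by $\tn{C}\mat{X}$ and $\tn{C}\mat{Z}$, which maps computational basis states to computational basis states up to sign. The plan is therefore to verify the identity on the computational basis $\ket{\alpha,\beta,\gamma}$ with $\alpha,\beta,\gamma\in\{0,1\}$ for $a,b,q$, tracking the phase.

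\emph{Left-hand side.} Applying $(\tn{C}\mat{X})_{a\to q}$ sends $\ket{\alpha,\beta,\gamma}$ to $\ket{\alpha,\beta,\gamma\oplus\alpha}$. Then $(\tn{C}\mat{Z})_{b,q}$ contributes the phase $(-1)^{\beta(\gamma\oplus\alpha)}$.

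\emph{Right-hand side.} First $(\tn{C}\mat{Z})_{b,q}$ gives the phase $(-1)^{\beta\gamma}$. Next $(\tn{C}\mat{X})_{a\to q}$ maps the state to $\ket{\alpha,\beta,\gamma\oplus\alpha}$. Finally $(\tn{C}\mat{Z})_{a,b}$ gives the phase $(-1)^{\alpha\beta}$.

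\emph{Comparison.} The final basis states agree. The phases agree because, modulo $2$,
\begin{equation*}
  \beta(\gamma\oplus\alpha)\equiv\beta\gamma+\alpha\beta .
\end{equation*}
By linearity the two operators coincide. Distinctness of $a$ and $b$ is needed only so that $(\tn{C}\mat{Z})_{a,b}$ is a genuine two-qubit gate.

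An alternative check in the Heisenberg picture is to conjugate: $(\tn{C}\mat{X})_{a\to q}$ maps $Z_q\mapsto Z_aZ_q$, so $\tn{C}\mat{Z}_{b,q}$ becomes $\tn{C}\mat{Z}_{b,q}\,\tn{C}\mat{Z}_{a,b}$ (using $\tn{C}\mat{Z}=\tfrac{1}{2}(I+Z+Z-ZZ)$ expansions). I will use the basis calculation, as it is more transparent.

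For the ``consequently'' part, I will read the identity as stating that swapping the adjacent pair $(\tn{C}\mat{X})_{a\to q}$ and $(\tn{C}\mat{Z})_{b,q}$ yields the same operator up to an extra $(\tn{C}\mat{Z})_{a,b}$ factor. This factor commutes with every other $\tn{C}\mat{Z}$, and with $\tn{C}\mat{X}$ gates whose target is not $a$ or $b$. Since ancillas are only ever controls, the factor can be moved to the end of the circuit.

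No real obstacle is expected. The only care needed is to fix the control/target convention, with $a$ as the control, and to keep the operator product order right-to-left as time order.
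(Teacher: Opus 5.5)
Your proof is correct. The paper gives no argument of its own for this lemma; it imports the identity from Zhang \emph{et al.}~\cite{Zhang-etal26_1sub} (Prop.~1). Your computational-basis check is therefore a complete, self-contained proof. Both sides send $|\alpha,\beta,\gamma\rangle$ to $|\alpha,\beta,\gamma\oplus\alpha\rangle$, and the phases agree because $\beta(\gamma\oplus\alpha)\equiv\beta\gamma+\alpha\beta\pmod 2$.

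Two minor remarks.

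First, your Heisenberg-picture alternative stops one step short of the stated identity. Conjugation by $(\tn{C}\mat{X})_{a\to q}$ only yields
$(\tn{C}\mat{Z})_{b,q}(\tn{C}\mat{X})_{a\to q}=(\tn{C}\mat{X})_{a\to q}(\tn{C}\mat{Z})_{a,b}(\tn{C}\mat{Z})_{b,q}$.
To reach the form in the lemma, you also need that $(\tn{C}\mat{Z})_{a,b}$ commutes with $(\tn{C}\mat{X})_{a\to q}$. This holds because $a$ is the control and $(\tn{C}\mat{Z})_{a,b}$ is diagonal in $a$ and $b$. Since you adopt the basis calculation anyway, this does not affect your proof.

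Second, your claim that the correction can be pushed to the end of the circuit goes beyond what the ``consequently'' clause asserts; that clause is just the identity read as a swap of adjacent gates. The claim is nevertheless true within the entangling block, because the ancillas are never targets of a $\tn{C}\mat{X}$. It is also exactly what the paper uses implicitly when it aggregates the per-qubit corrections into $(\tn{C}\mat{Z})_{a_i,b_j}^{\eta_{ij}(\vect{t})}$.
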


We say an ASC schedule of depth $\lambda$ is a tick-assignment vector
\begin{equation*}
  \vect{t}=(t_e)_{e\in\set{E}(\graph{G})}\in [0:\lambda-1]^{\set{E}(\graph{G})},
\end{equation*}
where $t_e$ is the tick assigned to edge $e$. Given $\vect{t}$, for each overlapping check pair $(x_i,z_j)$, we define its inversion parity by
\begin{equation*}
  \eta_{ij}(\vect{t})=
  \sum_{q\in\set{S}_{ij}}
  \eI{t_{(x_i,q)}<t_{(z_j,q)}}\pmod 2.
\end{equation*}
Applying Lemma~\ref{lem:crossed-ordering} to all qubits in $\set{S}_{ij}$ gives the residual ancilla coupling $(\tn{C}\mat{Z})_{a_i, b_j}^{\eta_{ij}(\vect{t})}$. Since $(\tn{C}\mat{Z})^2=\mat{I}$, this coupling is absent if and only if $\eta_{ij}(\vect{t})=0$. This gives constraint (C3) below.

\begin{definition}[ASC Scheduling Constraints~{\cite[Sec.~III-D]{Zhang-etal26_1sub}}]
  \label{def:asc-constraints}
  The tick-assignment vector $\vect{t}$ is called ASC-feasible if it satisfies the following conditions:
  \begin{itemize}
  \item[(C1)] edges incident on any check vertex receive distinct ticks;
  \item[(C2)] edges incident on any data vertex receive distinct ticks;
  \item[(C3)] every overlapping $\tn{X}/\tn{Z}$ check pair satisfies
    \begin{equation}
      \eta_{ij}(\vect{t})=0,
      \label{eq:q-valid}
    \end{equation}
    i.e., the induced ordering has even inversion parity.
  \end{itemize}
\end{definition}

We now express these constraints directly in the language of the CSS Tanner graph $\graph{G}$. Following standard graph-theoretic terminology~\cite{Diestel17_1}, a \emph{matching} is a set of pairwise nonincident edges. A partition
\begin{equation*}
  \set{E}(\graph{G})=\set{M}_{0}\sqcup\cdots\sqcup\set{M}_{\lambda-1}
\end{equation*}
into matchings is equivalently a proper $\lambda$-edge coloring, with $1$ color assigned to each matching. For the CSS Tanner graph induced by any fixed CSS PCMs, conditions (C1) and (C2) state precisely that every tick class is a matching: no $2$ interactions sharing either a check qubit or a data qubit occur in the same layer. Thus, (C1) and (C2) are the classical edge-coloring constraints, whereas (C3) additionally orders the color classes so that every $\tn{X}/\tn{Z}$ overlap has even inversion parity.

\begin{definition}[Proper Ordered Edge Coloring]
  \label{def:proper-ordered-EC}
  A proper ordered $\lambda$-edge coloring is a map
  \begin{equation*}
    \varphi\colon\set{E}(\graph{G})\to [0:\lambda-1]
  \end{equation*}
  such that incident edges receive distinct colors. The ordering of the colors induces the gate order: $\varphi(e)<\varphi(e')$ means that the gate associated with $e$ is applied before the gate associated with $e'$.
\end{definition}

The following proposition follows directly from Definitions~\ref{def:asc-constraints} and~\ref{def:proper-ordered-EC}.
\begin{proposition}[Quantum-Constrained Edge-Coloring Criterion]
  \label{prop:q-valid_properEC}
  Following Definition~\ref{def:asc-constraints}, a tick-assignment vector $\vect{t}=(t_e)_{e\in\set{E}(\graph{G})}$ with $t_e\in [0:\lambda-1]$ is called \emph{ASC-feasible} if and only if
  \vspace{-2ex}
  \begin{IEEEeqnarray}{rCl}
    \IEEEeqnarraymulticol{3}{l}{\tn{[Proper edge coloring]:}}
    \nonumber\\[-1mm]
    t_e & \neq & t_{e'}, \quad
    \forall\,v\in\set{V}(\graph{G}),\ e\neq e'\in\delta(v),
    \label{eq:properEC_ASC}
    \\[1mm]
    \IEEEeqnarraymulticol{3}{l}{\tn{[Quantum-ordering constraint]:}}
    \nonumber\\[-1mm]
    \eta_{ij}(\vect{t}) & = & 0 \pmod 2, \quad
    \forall\,(i,j)\tn{ with }\set{S}_{ij}\neq\emptyset.
    \label{eq:q-ordering_ASC}
  \end{IEEEeqnarray}
  The first condition is precisely a proper edge coloring of the Tanner graph, with ticks playing the role of colors, while the second imposes the additional ordering-parity condition required for a valid syndrome-extraction schedule.
\end{proposition}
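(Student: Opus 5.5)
We prove the equivalence by matching the conditions of Definition~\ref{def:asc-constraints} one by one with the two displayed constraints, using only the definition of the CSS Tanner graph $\graph{G}(\mat{H}_{\tn{X}},\mat{H}_{\tn{Z}})$ from Definition~\ref{def:CSS-Tanner-graph}. The first step is to note that $\set{V}(\graph{G})=\set{V}_{\tn{X}}\sqcup\set{V}_{\tn{D}}\sqcup\set{V}_{\tn{Z}}$. Hence the universally quantified condition \eqref{eq:properEC_ASC} over all $v\in\set{V}(\graph{G})$ splits into two parts. The first part ranges over $v\in\set{V}_{\tn{X}}\sqcup\set{V}_{\tn{Z}}$, which is exactly (C1). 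The second part ranges over $v\in\set{V}_{\tn{D}}$, which is exactly (C2).

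The next step is to observe that \eqref{eq:properEC_ASC} states that $\vect{t}$, viewed as a map $\set{E}(\graph{G})\to[0:\lambda-1]$, gives distinct colors to incident edges. By Definition~\ref{def:proper-ordered-EC}, this is precisely a proper ordered $\lambda$-edge coloring. Equivalently, each tick class $\set{M}_\ell=\{e:t_e=\ell\}$ is a matching, since two edges in $\set{M}_\ell$ sharing a vertex $v$ would both lie in $\delta(v)$ with equal ticks. This justifies the interpretive sentence in the statement.

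For the parity part, we check that (C3) quantifies over "overlapping" check pairs, meaning pairs $(x_i,z_j)$ with $\set{S}_{ij}=\set{N}(x_i)\cap\set{N}(z_j)\neq\emptyset$. This is exactly the index set in \eqref{eq:q-ordering_ASC}, and the condition $\eta_{ij}(\vect{t})=0$ is identical in both. Pairs with $\set{S}_{ij}=\emptyset$ have an empty sum, so $\eta_{ij}=0$ holds vacuously and including or excluding them changes nothing.

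Combining these identifications gives both directions of the ``if and only if'' at once, because each condition on one side is literally a rewriting of a condition on the other side. The only point that needs care is the bookkeeping in the first step: every vertex of $\graph{G}$ is either a check or a data vertex, and the neighbourhoods $\delta(v)$ are exactly the gate sets constrained in (C1) and (C2). Because the edges of $\graph{G}$ are defined directly from the nonzero entries of the PCMs, this follows immediately.
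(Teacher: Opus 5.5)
Your proposal is correct and follows the paper's argument. The paper says the proposition follows directly from Definitions~\ref{def:asc-constraints} and~\ref{def:proper-ordered-EC}: (C1) and (C2) together say every tick class is a matching, i.e.\ a proper edge coloring over check and data vertices, and (C3) is the parity condition over overlapping check pairs. Your explicit vertex-class split and the remark about pairs with empty shared support simply spell out that bookkeeping.
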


\begin{definition}[Optimal Syndrome-Extraction Circuit Depth]
  \label{def:sec-depth}
  For a fixed pair of CSS PCMs $(\mat{H}_{\tn{X}},\mat{H}_{\tn{Z}})$, the optimal syndrome-extraction circuit depth in the standard single-ancilla model is defined as
  \begin{IEEEeqnarray*}{rCl}
    \lambda_{\tn{SEC}}^{\star}(\mat{H}_{\tn{X}},\mat{H}_{\tn{Z}})
    & \eqdef & \min\{\lambda:\tn{an ASC-feasible schedule}
    \\
    &&\qquad\qquad\tn{of depth $\lambda$ exists}\}.
  \end{IEEEeqnarray*}
\end{definition}

Proposition~\ref{prop:q-valid_properEC} concerns the correctness of syndrome measurement and does not imply distance preservation or hook-error optimality. In particular, different $\tn{C}\mat{X}$ orderings of the same depth may lead to different residual errors and circuit distances~\cite{StrikisBrowneBeverland26_1sub,Bravyi-etal24_1}.

\begin{figure*}[!t]
  \begin{minipage}[t]{0.485\textwidth}
    \centering
    \resizebox{0.95\columnwidth}{!}{%
      \begin{tikzpicture}[every node/.style={font=\scriptsize}]
        \coordinate (X0) at (0,0); \coordinate (X1) at (0,-0.9); \coordinate (X2) at (0,-1.8);
        \coordinate (Z0) at (8,0); \coordinate (Z1) at (8,-0.9); \coordinate (Z2) at (8,-1.8);
        \coordinate (Q0) at (4,0.225); \coordinate (Q1) at (4,-0.225); \coordinate (Q2) at (4,-0.675);
        \coordinate (Q3) at (4,-1.125); \coordinate (Q4) at (4,-1.575); \coordinate (Q5) at (4,-2.025);
        \begin{scope}[line width=0.7pt]
          \draw[red!75!black] (X0)--(Q0); \draw[blue!70!black] (X0)--(Q2); \draw[orange!85!black] (X0)--(Q3); \draw[green!55!black] (X0)--(Q5);
          \draw[blue!70!black] (X1)--(Q0); \draw[red!75!black] (X1)--(Q1); \draw[green!55!black] (X1)--(Q3); \draw[orange!85!black] (X1)--(Q4);
          \draw[blue!70!black] (X2)--(Q1); \draw[red!75!black] (X2)--(Q2); \draw[green!55!black] (X2)--(Q4); \draw[orange!85!black] (X2)--(Q5);
          \draw[green!55!black,densely dashed] (Z0)--(Q0); \draw[orange!85!black,densely dashed] (Z0)--(Q1); \draw[blue!70!black,densely dashed] (Z0)--(Q3); \draw[red!75!black,densely dashed] (Z0)--(Q4);
          \draw[green!55!black,densely dashed] (Z1)--(Q1); \draw[orange!85!black,densely dashed] (Z1)--(Q2); \draw[blue!70!black,densely dashed] (Z1)--(Q4); \draw[red!75!black,densely dashed] (Z1)--(Q5);
          \draw[orange!85!black,densely dashed] (Z2)--(Q0); \draw[green!55!black,densely dashed] (Z2)--(Q2); \draw[red!75!black,densely dashed] (Z2)--(Q3); \draw[blue!70!black,densely dashed] (Z2)--(Q5);
        \end{scope}
        \foreach \i in {0,1,2}{
          \node[rectangle,draw,fill=blue!12,minimum width=16pt,minimum height=13pt,inner sep=0pt] at (X\i) {$x_\i$};
          \node[rectangle,draw,fill=red!12,minimum width=16pt,minimum height=13pt,inner sep=0pt] at (Z\i) {$z_\i$};
        }
        \foreach \i/\q in {0/A_0,1/A_1,2/A_2,3/B_0,4/B_1,5/B_2}
        \node[circle,draw,fill=gray!12,minimum size=13pt,inner sep=0pt] at (Q\i) {$\q$};
        \node[above] at (0,0.55) {$\tn{X}$ checks}; \node[above] at (4,0.75) {data qubits}; \node[above] at (8,0.55) {$\tn{Z}$ checks};
        \foreach \xx/\cc/\lab in {0.5/red!75!black/0,2.3/orange!85!black/1,4.1/green!55!black/2,5.9/blue!70!black/3}{
          \draw[\cc,line width=0.9pt] (\xx,-3.1)--++(0.55,0); \node[anchor=west] at (\xx+0.65,-3.1) {tick \lab};
        }
      \end{tikzpicture}%
    }
    \caption{A quantum-valid $4$-edge coloring of the CSS Tanner graph in Fig.~\ref{fig:tanner622-uncolored}. Solid and dashed edges represent $\tn{C}\mat{X}$ and $\tn{C}\mat{Z}$ gates, respectively. Each color class is a perfect matching of size $6$, and $\eta_{ij}(\vect{t})=0$ for every overlapping $\tn{X}/\tn{Z}$ check pair.}
    \label{fig:tanner622}
  \end{minipage}\hfill
  \begin{minipage}[t]{0.485\textwidth}
    \centering
    \resizebox{0.95\columnwidth}{!}{%
      \begin{tikzpicture}[every node/.style={font=\scriptsize}]
        \coordinate (X0) at (0,0); \coordinate (X1) at (0,-0.9); \coordinate (X2) at (0,-1.8);
        \coordinate (Z0) at (8,0); \coordinate (Z1) at (8,-0.9); \coordinate (Z2) at (8,-1.8);
        \coordinate (Q0) at (4,0.225); \coordinate (Q1) at (4,-0.225); \coordinate (Q2) at (4,-0.675);
        \coordinate (Q3) at (4,-1.125); \coordinate (Q4) at (4,-1.575); \coordinate (Q5) at (4,-2.025);
        \draw[blue!70!black!25,line width=0.7pt] (X1)--(Q0); \draw[red!75!black!25,line width=0.7pt] (X1)--(Q1);
        \draw[green!55!black!25,line width=0.7pt] (X1)--(Q3); \draw[orange!85!black!25,line width=0.7pt] (X1)--(Q4);
        \draw[blue!70!black!25,line width=0.7pt] (X2)--(Q1); \draw[red!75!black!25,line width=0.7pt] (X2)--(Q2);
        \draw[green!55!black!25,line width=0.7pt] (X2)--(Q4); \draw[orange!85!black!25,line width=0.7pt] (X2)--(Q5);
        \draw[green!55!black!25,line width=0.7pt,densely dashed] (Z1)--(Q1); \draw[orange!85!black!25,line width=0.7pt,densely dashed] (Z1)--(Q2);
        \draw[blue!70!black!25,line width=0.7pt,densely dashed] (Z1)--(Q4); \draw[red!75!black!25,line width=0.7pt,densely dashed] (Z1)--(Q5);
        \draw[orange!85!black!25,line width=0.7pt,densely dashed] (Z2)--(Q0); \draw[green!55!black!25,line width=0.7pt,densely dashed] (Z2)--(Q2);
        \draw[red!75!black!25,line width=0.7pt,densely dashed] (Z2)--(Q3); \draw[blue!70!black!25,line width=0.7pt,densely dashed] (Z2)--(Q5);
        \draw[red!75!black,line width=0.7pt] (X0)--(Q0); \draw[blue!70!black,line width=0.7pt] (X0)--(Q2);
        \draw[orange!85!black,line width=0.7pt] (X0)--(Q3); \draw[green!55!black,line width=0.7pt] (X0)--(Q5);
        \draw[green!55!black,line width=0.7pt,densely dashed] (Z0)--(Q0); \draw[orange!85!black,line width=0.7pt,densely dashed] (Z0)--(Q1);
        \draw[blue!70!black,line width=0.7pt,densely dashed] (Z0)--(Q3); \draw[red!75!black,line width=0.7pt,densely dashed] (Z0)--(Q4);
        \foreach \i in {0,1,2}{
          \node[rectangle,draw,fill=blue!12,minimum width=16pt,minimum height=13pt,inner sep=0pt] at (X\i) {$x_\i$};
          \node[rectangle,draw,fill=red!12,minimum width=16pt,minimum height=13pt,inner sep=0pt] at (Z\i) {$z_\i$};
        }
        \foreach \i/\q in {0/A_0,1/A_1,2/A_2,3/B_0,4/B_1,5/B_2}
        \node[circle,draw,fill=gray!12,minimum size=13pt,inner sep=0pt] at (Q\i) {$\q$};
        \node[above] at (0,0.55) {$\tn{X}$ checks}; \node[above] at (4,0.75) {data qubits}; \node[above] at (8,0.55) {$\tn{Z}$ checks};
        \foreach \xx/\cc/\lab in {0.5/red!75!black/0,2.3/orange!85!black/1,4.1/green!55!black/2,5.9/blue!70!black/3}{
          \draw[\cc,line width=0.9pt] (\xx,-3.1)--++(0.55,0); \node[anchor=west] at (\xx+0.65,-3.1) {tick \lab};
        }
      \end{tikzpicture}%
    }
    \caption{The order-$3$ LocalASC reduction for the $[[6,2,2]]$ code. The $8$ saturated edges are orbit representatives and correspond to the tick variables of the orbit model. The remaining $16$ edges are shown faintly, since all edges in the same orbit receive the same tick. Each monochromatic orbit contains $3$ edges.}
    \label{fig:LocalASC-orbits622}
  \end{minipage}
\end{figure*}

\subsection{Quantum-Constrained Edge Chromatic Number}
\label{sec:qcec-number}

We now define the quantum-constrained edge chromatic number using the scheduling constraints in Definition~\ref{def:asc-constraints}.
\begin{definition}[Quantum-Constrained Edge Chromatic Number]
  \label{def:qcec}
  For a fixed pair of CSS PCMs $(\mat{H}_{\tn{X}},\mat{H}_{\tn{Z}})$, the quantum-constrained edge chromatic number is
  \begin{IEEEeqnarray*}{rCl}
    \Qchi(\mat{H}_{\tn{X}},\mat{H}_{\tn{Z}})
    & = & \min\{\lambda:\tn{a proper ordered $\lambda$-edge coloring}
    \\
    && \qquad\qquad\tn{satisfying~\eqref{eq:q-ordering_ASC} exists}\}.
  \end{IEEEeqnarray*}
\end{definition}

\begin{theorem}[Optimal Two-Qubit-Depth Characterization]
  \label{thm:qcec-depth}
  In the standard single-ancilla $\tn{C}\mat{X}$/$\tn{C}\mat{Z}$ syndrome-measurement model,
  \begin{equation*}
    \lambda_{\tn{SEC}}^{\star}(\mat{H}_{\tn{X}},\mat{H}_{\tn{Z}})=\Qchi(\mat{H}_{\tn{X}},\mat{H}_{\tn{Z}}).
  \end{equation*}
\end{theorem}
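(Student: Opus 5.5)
The plan is to prove the equality by establishing a bijective correspondence between the two feasible sets and then observing that both minima are taken over the same set of integers $\lambda$. By Definition~\ref{def:sec-depth}, $\lambda_{\tn{SEC}}^{\star}$ is the least $\lambda$ for which an ASC-feasible tick-assignment vector $\vect{t}\in[0:\lambda-1]^{\set{E}(\graph{G})}$ exists. By Definition~\ref{def:qcec}, $\Qchi$ is the least $\lambda$ for which a proper ordered $\lambda$-edge coloring $\varphi$ satisfying~\eqref{eq:q-ordering_ASC} exists. I would identify $\vect{t}$ with the map $\varphi(e)\eqdef t_e$ and check that, for every fixed $\lambda$, the two feasibility conditions coincide.

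\textbf{Step 1.} For fixed $\lambda$, the map $\vect{t}\mapsto\varphi$ with $\varphi(e)=t_e$ is a bijection between $[0:\lambda-1]^{\set{E}(\graph{G})}$ and maps $\set{E}(\graph{G})\to[0:\lambda-1]$. Proposition~\ref{prop:q-valid_properEC} states that (C1)--(C2) are equivalent to~\eqref{eq:properEC_ASC}, and~\eqref{eq:properEC_ASC} says exactly that incident edges receive distinct colors, i.e., Definition~\ref{def:proper-ordered-EC}. Condition (C3) is literally~\eqref{eq:q-ordering_ASC}, since $\eta_{ij}$ is defined from the ordering of ticks, which is the color order of $\varphi$. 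One should note that pairs with $\set{S}_{ij}=\emptyset$ are not constrained in either formulation, because their empty sum gives $\eta_{ij}=0$ automatically.

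\textbf{Step 2.} Step 1 shows the sets of admissible $\lambda$ are identical, so their minima coincide. Both sets are nonempty: taking $\lambda=\ecard{\set{E}(\graph{G})}$ with a suitable injective ordering gives a feasible coloring. Finiteness of the minimum then follows from well-ordering.

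\textbf{Step 3 (the substantive point).} One must justify that an ASC-feasible $\vect{t}$ of depth $\lambda$ really is a valid single-ancilla syndrome-extraction circuit of two-qubit depth $\lambda$, and conversely that any valid circuit yields such a $\vect{t}$. For the forward direction, the schedule is realized by placing the gate for $e$ in layer $t_e$. By (C1)--(C2) each layer acts on disjoint qubit pairs. Commuting all gates via Lemma~\ref{lem:crossed-ordering} into a reference order (all $\tn{X}$ interactions before all $\tn{Z}$ interactions) produces the residual $\prod(\tn{C}\mat{Z})_{a_i,b_j}^{\eta_{ij}}$, which is trivial by (C3). The reference circuit measures the checks correctly because $\mat{H}_{\tn{X}}\trans{\mat{H}_{\tn{Z}}}=0$. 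For the converse, in the single-ancilla model each edge corresponds to exactly one two-qubit gate. Reading off its layer gives $\vect{t}$ satisfying (C1)--(C2). Validity forces the residual ancilla couplings to vanish, and since distinct $(\tn{C}\mat{Z})_{a_i,b_j}$ act on distinct ancilla pairs, each $\eta_{ij}=0$ separately.

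I expect this converse, showing that correctness forces each parity to vanish individually rather than only the product, to be the main obstacle. I would handle it by taking the model to be the ASC model of~\cite{Zhang-etal26_1sub}, in which validity is defined by exactly (C1)--(C3), so that the theorem reduces to Steps 1--2.
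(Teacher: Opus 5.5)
Your proposal is correct and follows the paper's own argument. The paper's proof likewise notes that each two-qubit layer is one tick and that, by Proposition~\ref{prop:q-valid_properEC}, (C1)--(C3) are exactly the proper-coloring and parity conditions of Definition~\ref{def:qcec}, so the two minimizations run over the same set of depths. Your additions are the nonemptiness check and the circuit-level justification in Step~3, both of which the paper leaves implicit; for nonemptiness, any injective order placing all $\tn{X}$ edges before all $\tn{Z}$ edges gives $\eta_{ij}=\ecard{\set{S}_{ij}}\equiv 0$. Your final reduction to the ASC model is the same stance the paper takes.
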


\begin{IEEEproof}
  Each two-qubit layer defines $1$ tick, and conditions (C1)--(C3) characterize the valid schedules by Proposition~\ref{prop:q-valid_properEC}. Minimizing the number of ticks is therefore equivalent to minimizing the number of colors in Definition~\ref{def:qcec}.
\end{IEEEproof}

The Tanner graph $\graph{G}$ is bipartite. Hence, if (C3) is omitted, K\"onig's line-coloring theorem gives $\chi'(\graph{G})=\Delta(\graph{G})$~\cite{Diestel17_1}. Since the quantum constraint (C3) is imposed in addition to proper edge coloring, clearly we have
\begin{equation*}
\Qchi(\mat{H}_{\tn{X}},\mat{H}_{\tn{Z}})\ge \chi'(\graph{G})=\Delta(\graph{G}).
\end{equation*}
Thus, any depth exceeding $\Delta(\graph{G})$ is due to the quantum-ordering constraint. We define the \emph{quantum scheduling gap} as
\begin{equation*}
  \Qgap(\mat{H}_{\tn{X}},\mat{H}_{\tn{Z}})\eqdef\Qchi(\mat{H}_{\tn{X}},\mat{H}_{\tn{Z}})-\Delta(\graph{G})\ge0.
\end{equation*}
For a $\Delta$-regular bipartite Tanner graph $\graph{G}$, every $\Delta$-edge coloring is a decomposition into $\Delta$ perfect matchings. Hence, when $\Qgap=0$, the optimal schedule consists of $\Delta(\graph{G})$ perfect matchings whose ordering satisfies all quantum-ordering constraints in~\eqref{eq:q-valid}.

\begin{example}[Example~\ref{ex:622} Continued]
  \label{ex:622-schedule}
  Fig.~\ref{fig:tanner622} illustrates a valid $4$-tick edge coloring of the Tanner graph in Fig.~\ref{fig:tanner622-uncolored} for the code in Example~\ref{ex:622}. The graph has $24$ edges, $1$ for each two-qubit interaction in a syndrome-extraction round; each check has weight $4$, and each data qubit has degree $4$. Thus, $w=\Delta=4$. Its $4$ tick classes form a partition $\set{E}(\graph{G})=\set{M}_0\sqcup\cdots\sqcup\set{M}_{3}$, where each $\set{M}_{\ell}$ is a perfect matching containing $6$ edges, one incident on each data-qubit vertex and each check vertex. Hence, (C1) and (C2) hold. For (C3), direct evaluation of the $9$ overlapping check pairs gives
  \begin{equation*}
    \bigl(\eta_{ij}(\vect{t})\bigr)_{0\le i,j\le 2}
    =
    \begin{pmatrix}
      2&0&2
      \\
      2&2&0
      \\
      0&2&2
    \end{pmatrix}
    \bmod 2
    =\mat{0}_{3\times3}.
  \end{equation*}
  For instance, the $4$ shared qubits $(A_0,A_2,B_0,B_2)$ of $(x_0,z_2)$ contribute
  \begin{IEEEeqnarray*}{rCl}
    \eta_{0,2}(\vect{t})& = &\bigl(\eI{0<1}+\eI{3<2}+\eI{1<0}+\eI{2<3}\bigr)
    \\
    & = &2\equiv0\pmod 2.
  \end{IEEEeqnarray*}
  Thus, the assignment satisfies (C3) and is ASC-feasible by Definition~\ref{def:asc-constraints}. Equivalently, Proposition~\ref{prop:q-valid_properEC} implies that no residual $\tn{C}\mat{Z}$ coupling remains between any overlapping ancilla pair. Since $w=\Delta(\graph{G})=4$, the scheduling lower bound is attained, and
  \begin{equation*}
    \Qchi=4=\Delta.
  \end{equation*}
  Thus, the pair of CSS PCMs is weight-optimal, the displayed schedule is lower-bound-saturating and hence depth-optimal, and $\Qgap=0$ for the $[[6,2,2]]$ running example.
\end{example}

Let $\Delta_{\tn{D}}$ denote the maximum data-qubit degree of $\graph{G}$. Every ASC-feasible schedule satisfies
\begin{equation*}
  \lambda\ge\max\{w,\Delta_{\tn{D}}\}=\Delta(\graph{G}).
\end{equation*}
We call an ASC-feasible schedule \emph{depth-optimal} if its depth is $\Qchi$, and \emph{lower-bound-saturating} if its depth is $\Delta(\graph{G})$. Every lower-bound-saturating schedule is depth-optimal. We call a pair of measured CSS PCMs \emph{weight-optimal} if
\begin{equation*}
  \Qchi(\mat{H}_{\tn{X}},\mat{H}_{\tn{Z}})=w.
\end{equation*}
Since $\Qchi\ge\Delta(\graph{G})=\max\{w,\Delta_{\tn{D}}\}$, weight optimality implies $w\ge\Delta_{\tn{D}}$ and the existence of a depth-$w$ schedule attaining the degree lower bound.

\section{LocalASC: Local Automorphism-Aware Syndrome Compilation}
\label{sec:LocalASC}

In this section, we introduce LocalASC with a motivating example that illustrates how a code's algebraic structure can be exploited to perform ASC locally.

\begin{example}[Example~\ref{ex:622-schedule} Continued]
  \label{ex:LocalASC-example}
  
  For the code in Example~\ref{ex:622}, the unreduced ASC model uses $24$ edge variables and finds a depth-optimal $4$-tick schedule. The group $\Aut(\graph{G})$ has order $48$ and contains elements of orders $1$, $2$, $3$, $4$, and $6$.
  
  Let $\mat{P}$ be the cyclic permutation matrix in Example~\ref{ex:622}, corresponding to the translation $0\mapsto 1\mapsto 2\mapsto 0$ on $\set{C}_3$. Since $\mat{A}=\mat{B}=\mat{I}+\mat{P}$ is circulant, $\mat{P}\mat{A}\trans{\mat{P}}=\mat{A}$ and $\mat{P}\mat{B}\trans{\mat{P}}=\mat{B}$. Define the permutation of the $2$ data-qubit blocks by
  \begin{equation*}
    \mat{Q}=
    \begin{pmatrix}
      \mat{P}&\mat{0}\\
      \mat{0}&\mat{P}
    \end{pmatrix}.
  \end{equation*}
  Then, $\mat{P}\mat{H}_{\tn{X}}\trans{\mat{Q}}=\mat{H}_{\tn{X}}$ and $\mat{P}\mat{H}_{\tn{Z}}\trans{\mat{Q}}=\mat{H}_{\tn{Z}}$. Thus, the simultaneous cyclic permutation
  \begin{equation*}
    h=(x_0x_1x_2)(z_0z_1z_2)(A_0A_1A_2)(B_0B_1B_2)
  \end{equation*}
  preserves both the $\tn{X}$- and $\tn{Z}$-type Tanner incidences. Hence, $h\in\Aut(\graph{G})$. Moreover, $h^3=\operatorname{id}$ and $h\neq\operatorname{id}$, so
  \begin{equation*}
    \egen{h}\cong\set{C}_3\le\Aut(\graph{G}).
  \end{equation*}
  Therefore, $\egen{h}$ is precisely the cyclic translation symmetry inherited from the two-block construction.
  
  The numerical search identifies $\egen{h}$ as supporting an ASC-feasible schedule at $\lambda=4$. The two nonidentity elements of $\egen{h}$ move every Tanner edge. Hence, each edge orbit contains exactly three edges, and
  \begin{equation*}
    \bigcard{\quotient{\set{E}(\graph{G})}{\egen{h}}}=\frac{\ecard{\set{E}(\graph{G})}}{\ecard{\egen{h}}}
    =\frac{24}{3}=8.
  \end{equation*}
  Fig.~\ref{fig:LocalASC-orbits622} shows the corresponding orbit representation. Solving for one tick variable per orbit and lifting the resulting assignment recovers the coloring in Fig.~\ref{fig:tanner622}. Each of the $8$ orbit variables determines the common tick assigned to the $3$ Tanner edges in its orbit. Each color class in Fig.~\ref{fig:tanner622} is a perfect matching, and Example~\ref{ex:622-schedule} verifies (C3) for all $9$ overlapping check pairs. Therefore, the lifted coloring is quantum-valid.
\end{example}

Motivated by Example~\ref{ex:LocalASC-example}, we now formalize the orbit reduction and the LocalASC procedure. We consider a subgroup $\set{H}\le\Aut(\graph{G})$ whose elements preserve the vertex classes $\set{V}_{\tn{X}}$, $\set{V}_{\tn{D}}$, and $\set{V}_{\tn{Z}}$.

\subsubsection*{Orbit Model} We say that a global schedule $\vect{t}$ is \emph{$\set{H}$-invariant} if
\begin{IEEEeqnarray*}{c}
  t_{h(e)}=t_e,\qquad \forall\,h\in\set{H},\ \forall\,e\in\set{E}(\graph{G}),
\end{IEEEeqnarray*}
or, equivalently, if $\vect{t}$ is constant on every edge orbit.

\begin{definition}[Orbit-Reduced ASC Model]
  \label{def:orbit-reduced-ASC}
  Let $\set{H}$ be a subgroup of $\Aut(\graph{G})$ acting on $\set{E}(\graph{G})$. Let $\set{O}\eqdef\quotient{\set{E}(\graph{G})}{\set{H}}$ denote the set of edge orbits, and let $\pi\colon\set{E}(\graph{G})\to\set{O}$ be the orbit projection defined by $\pi(e)\eqdef\set{H}e$. For a fixed depth $\lambda$, introduce one tick variable $\tau_o\in[0:\lambda-1]$ for each orbit $o\in\set{O}$. The corresponding \emph{lifted} assignment on the original Tanner edges is
  \begin{IEEEeqnarray*}{rCl}
    \set{E}(\graph{G}) & \xrightarrow{\ \pi\ } & \set{O}\xrightarrow{\ \tau\ }[0:\lambda-1],\\
    e & \longmapsto & \pi(e)\longmapsto\tau_{\pi(e)}\eqdef t_e.
  \end{IEEEeqnarray*}
  Hence, the lifted schedule is $\set{H}$-invariant by construction. The $\set{H}$-orbit model is obtained from Definition~\ref{def:asc-constraints} by the exact substitution $t_e=\tau_{\pi(e)}$ for every $e\in\set{E}(\graph{G})$:
  \begin{IEEEeqnarray}{rCl}
    \IEEEeqnarraymulticol{3}{l}{\tn{[Proper edge coloring]:}\quad \forall\,v\in\set{V}(\graph{G}),\ \forall\,e\neq e'\in\delta(v),}\nonumber\\[-1mm]
    \tau_{\pi(e)} & \neq & \tau_{\pi(e')},
    \label{eq:properEC_orbit}
    \\[1mm]
    \IEEEeqnarraymulticol{3}{l}{\tn{[Quantum-ordering constraint]:}\quad \forall\,(i,j)\tn{ with }\set{S}_{ij}\neq\emptyset,}
    \nonumber\\[1mm]
    \sum_{q\in\set{S}_{ij}}\eI{\tau_{\pi(x_i,q)}<\tau_{\pi(z_j,q)}} & = & 0 \pmod 2. \label{eq:q-ordering_orbit}
  \end{IEEEeqnarray}
  All vertices, incident-edge sets $\delta(v)$, and common supports $\set{S}_{ij}$ appearing in these constraints are those of the original Tanner graph $\graph{G}$.
\end{definition}

\begin{theorem}[Orbit-Invariant Schedule Lifting]
  \label{thm:orbit-invariant-schedule-lifting}
  Fix $\lambda\ge1$ and a subgroup $\set{H}\le\Aut(\graph{G})$, and let $\pi$ be the orbit map in Definition~\ref{def:orbit-reduced-ASC}. For an orbit assignment $\tau=(\tau_o)_{o\in\set{O}}$, define
  \begin{equation*}
    \vect{t}=\bigl(\tau_{\pi(e)}\bigr)_{e\in\set{E}(\graph{G})}.
  \end{equation*}
  Then $\tau$ satisfies Definition~\ref{def:orbit-reduced-ASC} if and only if $\vect{t}$ satisfies Definition~\ref{def:asc-constraints} and is $\set{H}$-invariant. Consequently, this correspondence is a bijection between the two sets of assignments in the same tick domain $[0:\lambda-1]$.
\end{theorem}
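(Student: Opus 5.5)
The argument is essentially definitional, so I will organize it around the substitution $t_e=\tau_{\pi(e)}$ and check three things: invariance of the lifted vector, equivalence of the constraint systems, and bijectivity of the correspondence.

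\emph{Invariance.} First I show that for any orbit assignment $\tau$ the lifted vector $\vect{t}=(\tau_{\pi(e)})_e$ is automatically $\set{H}$-invariant. For $h\in\set{H}$ we have $\pi(h(e))=\set{H}h(e)=\set{H}e=\pi(e)$, hence $t_{h(e)}=t_e$. Here I use that $\set{H}$ acts on $\set{E}(\graph{G})$, which holds because each $h\in\Aut(\graph{G})$ maps edges to edges.

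\emph{Equivalence of constraints.} Next I compare \eqref{eq:properEC_orbit}--\eqref{eq:q-ordering_orbit} with \eqref{eq:properEC_ASC}--\eqref{eq:q-ordering_ASC}, via Proposition~\ref{prop:q-valid_properEC}, term by term. The orbit model is stated over the same index sets as the ASC model: every $v\in\set{V}(\graph{G})$, every pair $e\neq e'\in\delta(v)$, and every pair $(i,j)$ with $\set{S}_{ij}\neq\emptyset$. After the substitution $\tau_{\pi(e)}=t_e$, each orbit inequality $\tau_{\pi(e)}\neq\tau_{\pi(e')}$ becomes literally $t_e\neq t_{e'}$, and each orbit parity sum becomes literally $\eta_{ij}(\vect{t})$. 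The two systems are therefore the same set of predicates evaluated on $\vect{t}$. It follows that $\tau$ is feasible for Definition~\ref{def:orbit-reduced-ASC} if and only if $\vect{t}$ is ASC-feasible, and combined with invariance this gives the ``if and only if'' statement. One point is worth noting here: if two distinct edges at a common vertex lie in the same orbit, the orbit constraint reads $\tau_o\neq\tau_o$ and is unsatisfiable, and the corresponding $\vect{t}$ correctly violates (C1)/(C2). So no special case is needed, and I will remark on this explicitly.

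\emph{Bijection.} Finally I establish the bijection between orbit-model solutions and $\set{H}$-invariant ASC-feasible schedules in $[0:\lambda-1]$. The map $\tau\mapsto\vect{t}$ is injective because $\pi$ is surjective: each orbit is nonempty, so $\tau_o$ can be read off as $t_e$ for any $e\in o$. For surjectivity, given an $\set{H}$-invariant $\vect{t}$, I define $\tau_o\eqdef t_e$ for any $e\in o$. This is well defined because any two edges of $o$ are related by some $h\in\set{H}$ and $\vect{t}$ is invariant. By construction $\tau$ lifts back to $\vect{t}$, and by the previous step it is feasible.

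The only step needing any care is this well-definedness of $\tau$ on orbits, which relies on the fact that the orbits are exactly the equivalence classes generated by the group action. Everything else is direct substitution, so I expect no real obstacle.
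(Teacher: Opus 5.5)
Your proposal is correct and follows essentially the same route as the paper: you substitute $t_e=\tau_{\pi(e)}$ term by term in both constraint families, obtain invariance from $\pi(h(e))=\pi(e)$, and build the inverse by reading $\tau_o$ off any edge of the orbit, which is well defined by invariance. The only difference is presentational: you treat the constraint equivalence in both directions at once as identical predicates, whereas the paper writes out the forward and reverse directions separately.
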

\begin{IEEEproof}
  See Appendix~\ref{app:proof-orbit-lift}.
\end{IEEEproof}

By Theorem~\ref{thm:orbit-invariant-schedule-lifting}, feasibility of the orbit model at $\lambda=\max(w,\Delta_{\tn{D}})$ certifies a lower-bound-saturating, and hence depth-optimal, schedule. In contrast, infeasibility excludes only $\set{H}$-invariant schedules and does not imply global infeasibility. Let $\lambda_{\tn{OR}}(\set{H})$ denote the minimum depth among $\set{H}$-invariant ASC-feasible schedules. If $\set{H}_2\leq\set{H}_1$, then
\begin{IEEEeqnarray*}{rCl}
  \Qchi(\mat{H}_{\tn{X}},\mat{H}_{\tn{Z}})=\lambda_{\tn{SEC}}^\star(\mat{H}_{\tn{X}},\mat{H}_{\tn{Z}})& = &\lambda_{\tn{OR}}(\{1\})
  \\
  & \leq &\lambda_{\tn{OR}}(\set{H}_2)\leq\lambda_{\tn{OR}}(\set{H}_1).
\end{IEEEeqnarray*}
Thus, a smaller subgroup imposes fewer equality constraints on the edge ticks and cannot increase the minimum depth.

\subsubsection*{LocalASC Procedure} Algorithm~\ref{alg:LocalASC} summarizes the complete procedure. In practice, if two distinct edges incident on the same check or data vertex belong to the same orbit, they must receive the same tick and violate the proper-edge-coloring constraint. We exclude such subgroups before solving the orbit-reduced model. A feasible orbit assignment is accepted only after its lift has been verified against all global constraints (C1)--(C3).

\begin{algorithm}[t!]
  \caption{LocalASC at a fixed depth}
  \label{alg:LocalASC}
  \KwData{A fixed pair of measured CSS PCMs $\mat{H}_{\tn{X}},\mat{H}_{\tn{Z}}$, a depth $\lambda\ge\max(w,\Delta_{\tn{D}})$, and an ordered family of subgroups $\set{H}_1,\ldots,\set{H}_{\Gamma}\le\Aut(\graph{G})$.}
  \KwResult{An ASC-feasible global schedule $\vect{t}$, or no schedule found in the selected invariant classes.}
  Construct the CSS Tanner graph $\graph{G}$\;
  Form all constraints of Definition~\ref{def:asc-constraints} on $\graph{G}$\;
  \For{$j\in[1:\Gamma]$}{
    $\set{H}\leftarrow\set{H}_j$\;
    Form the edge-orbit set $\set{O}$ and orbit map $\pi$ of Definition~\ref{def:orbit-reduced-ASC}\;
    \If{two distinct edges incident on the same vertex belong to the same orbit}{
      \Continue\;
    }
    Introduce $\tau_o\in[0:\lambda-1]$ for each $o\in\set{O}$\;
    Replace $t_e$ by $\tau_{\pi(e)}$ in every original constraint\;
    Solve the resulting orbit-reduced model at depth $\lambda$ within its time limit\;
    \If{a feasible orbit assignment $\vect{\tau}$ is found}{
      Set $t_e\leftarrow\tau_{\pi(e)}$ for every $e\in\set{E}(\graph{G})$\;
      Independently verify \eqref{eq:properEC_ASC} and \eqref{eq:q-ordering_ASC} on the lifted schedule\;
      \If{both constraints hold on the full Tanner graph $\graph{G}$}{
        \Return $\vect{t}$\;
      }
    }
  }
  \Return no ASC-feasible schedule found at depth $\lambda$ among the selected subgroups\;  
\end{algorithm}

\subsubsection*{Choice of the Subgroup Family $\set{H}_1,\ldots,\set{H}_{\Gamma}$} For a general CSS code, we compute $\Aut(\graph{G})$ using \texttt{nauty}~\cite{McKayPiperno14_1}, with \texttt{VF2}~\cite{CordellaFoggiaSansoneVento04_1} when necessary. If $\Aut(\graph{G})$ contains a nontrivial element of odd order, we take $\set{H}_1$ to be generated by one of maximum odd order. Although $\set{H}_1$ contains no automorphism of order $2$, distinct edges incident on the same vertex may still belong to the same orbit. Algorithm~\ref{alg:LocalASC} excludes such subgroups before solving. If $\set{H}_1$ passes this check, we first solve the model for $\set{H}_1$ with a $10$ seconds time limit. If $\Aut(\graph{G})$ has no nontrivial element of odd order, or if $\set{H}_1$ places two edges incident on the same vertex in the same orbit or its orbit-reduced model is infeasible at depth $\lambda$, we try subgroups of $\Aut(\graph{G})$ in decreasing order of their cardinalities. These include cyclic subgroups, subgroups generated by $2$ elements, and $\Aut(\graph{G})$ itself, and each model is solved with a $60$ seconds time limit. If the first attempt to solve the model for $\set{H}_1$ reaches its time limit without an outcome, we first try the selected subgroups of larger cardinality than $\set{H}_1$ and then solve the model for $\set{H}_1$ again with a $30$ minutes time limit. Algorithm~\ref{alg:LocalASC} returns the first ASC-feasible schedule found among the selected subgroups. The family need not contain every subgroup of $\Aut(\graph{G})$, since the nontrivial-subgroup models are used only to search for a feasible schedule with fewer variables than the unreduced model. The identity subgroup $\set{H}=\{1\}$ recovers the unreduced ASC model.

\section{Constructing Automorphisms for LocalASC}
\label{sec:constructing-automorphisms}

LocalASC requires a verified subgroup $\set{H}\leq\Aut(\graph{G})$ of the Tanner graph associated with the measured CSS PCMs. This subgroup does not need to be the full Tanner graph automorphism group. When the code construction provides candidate symmetries, we construct the induced permutations of the checks and data qubits and verify them directly on the measured PCMs. If no construction-based candidate is available or all candidates fail verification, we use the generic graph-automorphism routine described in Section~\ref{sec:LocalASC}. In this section, we describe how to construct such subgroups, especially for two-block and quantum Tanner codes.

\subsection{Automorphism Subgroups for Two-Block CSS Codes}
\label{sec:AutG_TwoBlock-codes}

For a two-block CSS code of length $n$, the square matrices $\mat{A}$ and $\mat{B}$ have order $\nu\eqdef\nicefrac{n}{2}$, so both PCMs have size $\nu\times2\nu$ as~\eqref{eq:Hz-Hx_two-block-codes}. The first and second blocks of $\nu$ columns correspond to the data-qubit sets $\{q_1,\ldots,q_\nu\}$ and $\{q_{\nu+1},\ldots,q_{2\nu}\}$, respectively.

For the construction-based candidates, identify the $\tn{X}$-check indices, the $\tn{Z}$-check indices, and the indices within each data-qubit block with the elements of the following abelian group
\begin{equation*}
  \set{G}=\set{C}_{d_1}\times\cdots\times\set{C}_{d_r},\quad\nu=\ecard{\set{G}}=\prod_{j=1}^{r}d_j.
\end{equation*}
For $\vect{u}=(u_1,\ldots,u_r)$ and $\vect{g}=(g_1,\ldots,g_r)$ in $\set{G}$, define
\begin{equation}
  \rho_{\vect{g}}(\vect{u})\eqdef\vect{u}+\vect{g}.
  \label{eq:translation-index}
\end{equation}
In addition, in each coordinate $j$, the addition is modulo $d_j$. We apply $\rho_{\vect{g}}$ to the $\tn{X}$-check indices, the $\tn{Z}$-check indices, and the indices within each data-qubit block. We verify the induced permutations on the measured CSS PCMs.

\begin{enumerate}
\item \emph{Cyclic coordinates.} For $r=1$ and $d_1=\nu$, translation by $\vect{g}=(1)$ applies the same cyclic shift to the $\tn{X}$-check indices, the $\tn{Z}$-check indices, and the qubit indices within each data-qubit block. We use this permutation only if
  \begin{equation}
    \mat{P}\mat{H}_{\tn{X}}\trans{\mat{Q}}=\mat{H}_{\tn{X}},
    \qquad
    \mat{P}\mat{H}_{\tn{Z}}\trans{\mat{Q}}=\mat{H}_{\tn{Z}},
    \label{eq:permutation-identities_two-block-codes}
  \end{equation}
  where $\mat{P}$ applies the shift to the $\tn{X}$- and $\tn{Z}$-check indices and $\mat{Q}$ applies it within each data-qubit block. If both identities hold, the pairs $(\mat{P}^l,\mat{Q}^l)$ for $l=0,\ldots,\nu-1$ form the translation subgroup $\set{C}_{\nu}$.
  
\item \emph{Noncyclic abelian coordinates.} For a noncyclic abelian candidate, take $r=2$ or $3$ with $d_j\geq2$. For each coordinate $j$, choose $\vect{g}$ with $g_j=1$ and $g_\ell=0$ for every $\ell\neq j$. The permutations $\mat{P}$ and $\mat{Q}$ induced by each $\rho_{\vect{g}}$ must satisfy both identities in~\eqref{eq:permutation-identities_two-block-codes}. The verified translations generate an abelian subgroup.
\end{enumerate}

Here, we use the standard translation symmetry of two-block group-algebra codes~\cite{LinPryadko24_1, AydinTamoBarg26_1sub}. Let $a,b\in\Field_2[\set{G}]$, and let $\mat{A}=\mat{L}(a)$ and $\mat{B}=\mat{R}(b)$ be their matrices under left and right multiplication on $\set{G}$, respectively. When $\set{G}$ is abelian, simultaneous translation of the checks and both data-qubit blocks by any $\vect{g}\in\set{G}$ preserves the two PCMs in~\eqref{eq:Hz-Hx_two-block-codes}. Once the coordinate translations have been verified, their compositions give all translations in $\set{G}$. The implementation used here constructs the full translation group, corresponding to the coset-based code $\set{Q}_{\set{G}}^{\set{K}}(a,b)$ with $\set{K}=\{1\}$. It does not implement the general coset space $\quotient{\set{G}}{\set{K}}$ for an arbitrary subgroup $\set{K}\le\set{G}$. The algebraic structure of the general coset-based two-block construction may also yield nonabelian candidates for $\set{H}$, which we do not study here.

\subsubsection*{Construction of the Translation Action} Algorithm~\ref{alg:constructH_abelian-2BGA} forms the translation generators using~\eqref{eq:translation-index} and verifies their induced check and data-qubit permutations on the measured CSS PCMs.

\begin{algorithm}[t!]
  \caption{Construction and verification of an automorphism subgroup}
  \label{alg:constructH_abelian-2BGA}
  \KwData{Binary CSS PCMs $\mat{H}_{\tn{X}},\mat{H}_{\tn{Z}}$ with the same data-qubit columns.}
  \KwResult{A verified subgroup $\set{S}\le\Aut(\graph{G})$, represented by a set of generators $\set{T}$, or no construction-based subgroup found.}
  \If{both PCMs have size $\nu\times2\nu$}{
    Form $\set{T}$ from the shift by $\vect{g}=(1)$ in $\set{C}_{\nu}$ using~\eqref{eq:translation-index}, acting on the $\tn{X}$- and $\tn{Z}$-check indices and within each data-qubit block\;
    \If{the induced $\mat{P}$ and $\mat{Q}$ satisfy both identities in~\eqref{eq:permutation-identities_two-block-codes}}{
      \Return $\set{S}=\egen{\set{T}}$\;
    }
    \For{each $\set{G}=\set{C}_{d_1}\times\cdots\times\set{C}_{d_r}$ with $r=2$ or $3$, $d_j\geq2$, and $\prod_{j=1}^{r}d_j=\nu$}{
      Form $\set{T}$ from the $r$ coordinate shifts $\rho_{\vect{g}}$ with $g_j=1$ and $g_\ell=0$ for $\ell\neq j$, acting on the $\tn{X}$- and $\tn{Z}$-check indices and within each data-qubit block\;
      \If{each generator induces $\mat{P}$ and $\mat{Q}$ satisfying both identities in~\eqref{eq:permutation-identities_two-block-codes}}{
        \Return $\set{S}=\egen{\set{T}}$\;
      }
    }
  }
  \Return no construction-based subgroup found\;
\end{algorithm}

The output $\set{S}$ of Algorithm~\ref{alg:constructH_abelian-2BGA} provides candidate subgroups $\set{H}\leq\set{S}$ for Algorithm~\ref{alg:LocalASC}. Preservation of the vertex classes and Tanner graph incidences does not guarantee that distinct edges incident on the same vertex belong to distinct edge orbits. This condition is checked separately before solving, as required by the proper-edge-coloring condition in Definition~\ref{def:asc-constraints}. Only a lifted assignment that passes the final verification is accepted as an ASC-feasible schedule.

For the $[[36,4,4]]$, $[[36,4,6]]$, and $[[108,12,6]]$ codes in Table~\ref{tab:LocalASC-vs-ASC_depth6}, Algorithm~\ref{alg:constructH_abelian-2BGA} returns $\set{C}_3\times\set{C}_6$, $\set{C}_3\times\set{C}_6$, and $\set{C}_{54}$, respectively. These subgroups yield depth-$6$ schedules without computing $\Aut(\graph{G})$. For the $[[108,12,6]]$ code, this reduces the average LocalASC runtime from $769$ to $31$ milliseconds.

\subsection{Automorphism Groups of Quantum Tanner Codes}
\label{sec:qTan-aut}

Consider a classical Tanner code $\code C$ with parity-check matrix $\tn{Tan}(\graph{B}, \mat H)$ with bits on the variable nodes of an $(l, s)$-biregular base graph $\graph{B} = (\set V_\tn{c}\sqcup\set{V}_\tn{d}, \set E)$ and constraints given by an $m\times l$ parity-check matrix $\mat H$ with Tanner graph $\graph{G}^\mat{H}=(\set{V}_{\tn{c}}^\mat{H}\sqcup \set{V}_{\tn{d}}^\mat{H}, \set{E}^\mat{H})$, at each local view. The code $\code C$ has Tanner graph $\graph{G}$ with vertex set $\set{V}_\tn{x} \sqcup \set{V}_\tn{d}$ where $\set{V}_\tn{d}$ are the variable nodes and $\set{V}_\tn{x} = \set{V}_c\times \set{V}_c^\mat{H}$ are the check nodes. 
An automorphism $f\in \Aut(\graph{B})$ induces a map 
\begin{IEEEeqnarray*}{c}
  \begin{tikzcd}[row sep=0mm]
    \tilde{f}: \set{V}_\tn{x} \sqcup \set{V}_\tn{d} \ar[r]
    &\set{V}_\tn{x} \sqcup \set{V}_\tn{d}
    \\
    \hspace{3mm}\set{V}_\tn{x} \ni (v,t) \ar[r, mapsto]
    &(f(v),t)
    \\
    \hspace{3mm}\set{V}_\tn{d} \ni q \ar[r, mapsto]
    &f(q) \rlap{.}
  \end{tikzcd}
\end{IEEEeqnarray*}
As a function on vertex sets, $\tilde{f}$ is well-defined and invertible. More generally, we can, given a permutation $\sigma_v: \set{V}_\tn{c}^\mat{H} \to \set{V}_\tn{c}^\mat{H}$ for each $v\in \set{V}_\tn{c}$, define the function
\begin{IEEEeqnarray*}{c}
  \begin{tikzcd}[row sep=0mm]
    \tilde{f}_\sigma\colon\set{V}_\tn{x} \sqcup \set{V}_\tn{d} \ar[r]
    &\set{V}_\tn{x} \sqcup \set{V}_\tn{d}
    \\
    \hspace{3mm}\set{V}_\tn{x} \ni (v, t) \ar[r, mapsto]
    &(f(v), {\sigma_v(t)})
    \\
    \hspace{3mm}\set{V}_\tn{d} \ni q \ar[r, mapsto]
    &f(q) \rlap{.}
  \end{tikzcd}
\end{IEEEeqnarray*}
Next, we will consider sufficient conditions for $\tilde{f}$ and $\tilde{f}_{\sigma}$ to be graph homomorphisms on $\graph{G}$, and by that Tanner graph automorphisms for $\code C$.



\begin{proposition}
  \label{prop:automorphisms-classical-Tan}
  Let $f\in \Aut(\graph{B})$ and permutations $\sigma_v$ on $\set{V}_\tn{c}^\mat{H}$ be given, and define $\tilde{f}_\sigma$ as above. Then 
  \begin{equation*}
    \tilde{f}_\sigma\in \Aut(\graph{G}) \iff \sigma_v\sqcup f\at{\delta(v)}\in \Aut(\graph{G}^\mat{H}) \tn{ for all } v\in \set{V}_\tn{c}.
  \end{equation*}
\end{proposition}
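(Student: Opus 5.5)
The plan is to unwind both sides of the equivalence into a single edge-by-edge condition, using the bijections $\mu_v\colon \delta(v)\to\set{V}_\tn{d}^\mat{H}$ for $v\in\set{V}_\tn{c}$. First I will fix the meaning of $\sigma_v\sqcup f\at{\delta(v)}$ as a permutation of $\set{V}_\tn{c}^\mat{H}\sqcup\set{V}_\tn{d}^\mat{H}$. Since $f\in\Aut(\graph{B})$ preserves the bipartition, it restricts to a bijection $\delta(v)\to\delta(f(v))$. Transporting this bijection through the labelings gives the permutation
\begin{equation*}
  \phi_v\eqdef\mu\at{\delta(f(v))}\circ f\circ\bigl(\mu\at{\delta(v)}\bigr)^{-1}
\end{equation*}
of $\set{V}_\tn{d}^\mat{H}$, so that $\phi_v(\mu(v,q))=\mu(f(v),f(q))$. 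The local map is then $\sigma_v$ on check nodes and $\phi_v$ on data nodes. I will also note that $\tilde{f}_\sigma$ is a bijection of $\set{V}(\graph{G})$ preserving the check/data classes, since $f$ and each $\sigma_v$ are bijections. Because $\graph{G}$ and $\graph{G}^\mat{H}$ are finite, a class-preserving bijection is an automorphism as soon as it maps edges into edges: it is then injective on the finite edge set, hence bijective, so non-edges also go to non-edges. The same remark applies to the local graph.

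Next I will compute the image of an arbitrary edge $((v,t),q)\in\set{R}$. It goes to $((f(v),\sigma_v(t)),f(q))$. The base condition $(f(v),f(q))\in\set{E}$ holds automatically because $f\in\Aut(\graph{B})$. So the image is an edge if and only if $(\mu(f(v),f(q)),\sigma_v(t))=(\phi_v(\mu(v,q)),\sigma_v(t))\in\set{E}^\mat{H}$. Thus $\tilde{f}_\sigma$ maps $\set{R}$ into $\set{R}$ if and only if, for every $v\in\set{V}_\tn{c}$, every $q\in\set{N}(v)$ and every $t$, the implication
\begin{equation*}
  (\mu(v,q),t)\in\set{E}^\mat{H}\ \Longrightarrow\ (\phi_v(\mu(v,q)),\sigma_v(t))\in\set{E}^\mat{H}
\end{equation*}
holds. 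Since $\mu\at{\delta(v)}$ is a bijection onto $\set{V}_\tn{d}^\mat{H}$, as $q$ ranges over $\set{N}(v)$ the label $\ell=\mu(v,q)$ ranges over all of $\set{V}_\tn{d}^\mat{H}$. Hence the condition says exactly that $\sigma_v\sqcup\phi_v$ maps $\set{E}^\mat{H}$ into itself.

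For ($\Leftarrow$), each $\sigma_v\sqcup\phi_v$ is an automorphism, so the implication holds for all $v$. Therefore $\tilde{f}_\sigma$ maps edges into edges and, by the finiteness remark, is an automorphism. For ($\Rightarrow$), given a local edge $(\ell,t)\in\set{E}^\mat{H}$ and $v\in\set{V}_\tn{c}$, I will set $q=\mu\at{\delta(v)}^{-1}(\ell)$. Then $((v,t),q)\in\set{R}$, and its image under $\tilde{f}_\sigma$ being an edge yields $(\phi_v(\ell),\sigma_v(t))\in\set{E}^\mat{H}$. Finiteness again upgrades this to $\sigma_v\sqcup\phi_v\in\Aut(\graph{G}^\mat{H})$.

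The computations are routine. The main obstacle I expect is bookkeeping: making the identification $\delta(v)\cong\set{V}_\tn{d}^\mat{H}$ via $\mu$ explicit so that "$f\at{\delta(v)}$" is a genuine permutation of the local data nodes. Its target labels come from $\mu\at{\delta(f(v))}$, not $\mu\at{\delta(v)}$. I will also need to check that this same identification is used in the definition of $\set{R}$.
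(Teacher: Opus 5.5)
Your proposal is correct and follows essentially the same route as the paper's proof. Both transport $f|_{\delta(v)}$ to a permutation of the local data nodes through the labelings $\mu|_{\delta(v)}$ and $\mu|_{\delta(f(v))}$, compute the image $((f(v),\sigma_v(t)),f(q))$ of an edge, and reduce membership in $\mathcal{R}$ to membership of $(\phi_v(\mu(v,q)),\sigma_v(t))$ in the local edge set. You also make explicit two points the paper's short argument leaves implicit: surjectivity of $\mu|_{\delta(v)}$ realizes every local edge at every $v$, which the $(\Rightarrow)$ direction needs, and finiteness upgrades an edge-into-edge bijection to an automorphism.
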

\begin{IEEEproof}
  See Appendix~\ref{sec:proof_automorphisms-classical-Tan}.
\end{IEEEproof}

The automorphisms $\tilde{f}_\sigma$ can be thought of as symmetries $f$ of the base graph $\graph{B}$ and symmetries $\pi_v$ of the Tanner graph $\graph{G}^\mat{H}$ that fit together, in the sense that $\pi_v = \sigma_v \sqcup f\at{\delta(v)}$. 
In principle, one can also have automorphisms that come from symmetries between $\graph{G}^\mat{H}$ and $\graph{B}$, but we will not consider this case further here, as it is very restrictive.

\begin{corollary}
  \label{cor:nice-automorphisms}
  $\tilde{f}\in \Aut(\graph{G})$ if and only if $\operatorname{id}\sqcup f\at{\delta(v)}\in \Aut(\graph{G}^\mat{H})$. In particular, $\tilde{f}\in \Aut(\graph{G})$ if $f\in \Aut(\graph{B})$ is label-preserving.
\end{corollary}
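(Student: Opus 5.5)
The corollary follows by specializing Proposition~\ref{prop:automorphisms-classical-Tan} to the trivial family $\sigma_v=\operatorname{id}$ for all $v\in\set{V}_\tn{c}$. In that case $\tilde{f}_\sigma=\tilde{f}$ by definition, since $(v,t)\mapsto(f(v),\operatorname{id}(t))=(f(v),t)$ and data vertices map to $f(q)$. The proposition then gives directly
\begin{equation*}
  \tilde{f}\in\Aut(\graph{G})\iff \operatorname{id}\sqcup f\at{\delta(v)}\in\Aut(\graph{G}^\mat{H})\ \tn{for all}\ v\in\set{V}_\tn{c},
\end{equation*}
which is the first claim, with the quantifier over $v$ implicit in the statement.

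The only point needing care is how $f\at{\delta(v)}$ acts on $\set{V}_\tn{d}^\mat{H}$. I read it as the permutation obtained by transporting through the labelings: an edge $(v,q)\in\delta(v)$ maps to $(f(v),f(q))\in\delta(f(v))$, and the labels are $\mu|_{\delta(v)}$ and $\mu|_{\delta(f(v))}$. So $f\at{\delta(v)}$ denotes
\begin{equation*}
  \mu|_{\delta(f(v))}\circ f\circ\bigl(\mu|_{\delta(v)}\bigr)^{-1}\colon\set{V}_\tn{d}^\mat{H}\to\set{V}_\tn{d}^\mat{H}.
\end{equation*}
This is a bijection because each $\mu|_{\delta(\cdot)}$ is an isomorphism onto $\set{V}_\tn{d}^\mat{H}$ and $f$ maps $\delta(v)$ bijectively onto $\delta(f(v))$.

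For the second claim, call $f$ label-preserving if $\mu(f(e))=\mu(e)$ for every edge $e$. Then the transported map above is the identity on $\set{V}_\tn{d}^\mat{H}$ for every $v$, so $\operatorname{id}\sqcup f\at{\delta(v)}$ is the identity permutation of $\set{V}(\graph{G}^\mat{H})$. The identity is trivially an automorphism of $\graph{G}^\mat{H}$, so the criterion holds for all $v$ and $\tilde{f}\in\Aut(\graph{G})$.

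There is no real obstacle here. The main step is to fix this interpretation of $f\at{\delta(v)}$ via $\mu$ consistently with the appendix proof of the proposition, and to note that $\sigma_v=\operatorname{id}$ is a legitimate choice of the permutations $\sigma_v$ the proposition allows.
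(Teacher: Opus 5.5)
Your proposal is correct and follows the paper's proof: you specialize Proposition~\ref{prop:automorphisms-classical-Tan} to $\sigma_v=\operatorname{id}$, and you note that label preservation makes the transported map~\eqref{eq:local-var-map} the identity, which is trivially an automorphism of $\graph{G}^{\mat{H}}$. Your reading of $f\at{\delta(v)}$ as $\mu|_{\delta(f(v))}\circ f\circ(\mu|_{\delta(v)})^{-1}$, with the quantifier over all $v\in\set{V}_\tn{c}$ made explicit, is exactly the interpretation the paper uses.
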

\begin{IEEEproof}
  See Appendix~\ref{sec:proof_nice-automorphisms}.
\end{IEEEproof}

The definition of $\tilde{f}_\sigma$ generalizes straightforwardly to quantum Tanner codes, where we have two local codes instead of one. We write $\graph{I}(\set{X}) = \graph{I}(\graph{G}_0^\square) \cup \graph{I}(\graph{G}_1^\square)$ and call this the \textit{vertex-square incidence graph} of $\set{X}$. In $\Aut(\graph{I}(\set{X}))$, we only consider automorphisms that map $\set{V}_0$ to $\set{V}_0$, $\set{V}_1$ to $\set{V}_1$, and $\set{Q}$ to $\set{Q}$.
\cref{thm:aut-QT} then follows the same way as \cref{prop:automorphisms-classical-Tan}.

\begin{theorem}
  \label{thm:aut-QT}
  Given $f\in \Aut(\graph{I}(\set{X}))$ and permutations $\sigma_v$ on the check nodes of $\graph{G}_{\mat{G}_\tn{A}\otimes \mat{G}_\tn{B}}$ for each vertex $v$ of $\set{V}_0$ and on the check nodes of $\graph{G}_{\mat{H}_\tn{A}\otimes \mat{H}_\tn{B}}$ for each $v\in \set{V}_1$, the map $\tilde{f}_\sigma$ is an automorphism of the quantum Tanner code defined on $\set{X}$ using $\mat{H}_\tn{A}\otimes \mat{H}_\tn{B}$ and $\mat{G}_\tn{A}\otimes \mat{G}_\tn{B}$ as local restrictions if and only if
  \begin{IEEEeqnarray*}{rCl}
    \sigma_v\sqcup f\at{\delta(v)}& \in &\Aut(\graph{G}_{\mat{G}_\tn{A}\otimes \mat{G}_\tn{B}})\tn{ for all } v\in\set{V}_0, \,\,\tn{ and}
    \\
    \sigma_v\sqcup f\at{\delta(v)}& \in &\Aut(\graph{G}_{\mat{H}_\tn{A}\otimes \mat{H}_\tn{B}})\tn{ for all } v\in\set{V}_1,
  \end{IEEEeqnarray*}
  where $\delta(v) \subseteq \set{E}(\graph{I}(\set{X}))$.
\end{theorem}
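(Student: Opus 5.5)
We reduce \cref{thm:aut-QT} to \cref{prop:automorphisms-classical-Tan}, applied once to each of the two classical Tanner codes defining $\mat{H}_{\tn{X}}$ and $\mat{H}_{\tn{Z}}$. The CSS Tanner graph $\graph{G}$ of the quantum Tanner code is the union of two bipartite Tanner graphs. The first is $\graph{G}_{\tn{X}}$, the Tanner graph of $\tn{Tan}(\graph{I}(\graph{G}^\square_0),\mat{G}_\tn{A}\otimes\mat{G}_\tn{B})$, with check nodes $\set{V}_0\times\set{V}_\tn{c}^{\mat{G}_\tn{A}\otimes\mat{G}_\tn{B}}$ and data nodes $\set{Q}$. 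The second is $\graph{G}_{\tn{Z}}$, the Tanner graph of $\tn{Tan}(\graph{I}(\graph{G}^\square_1),\mat{H}_\tn{A}\otimes\mat{H}_\tn{B})$, with check nodes $\set{V}_1\times\set{V}_\tn{c}^{\mat{H}_\tn{A}\otimes\mat{H}_\tn{B}}$ and the same data nodes $\set{Q}$. These two graphs share only the data vertices, and their edge sets are disjoint. We consider only automorphisms preserving $\set{V}_{\tn{X}}$, $\set{V}_{\tn{D}}$, and $\set{V}_{\tn{Z}}$. Hence a vertex permutation $\phi$ of $\graph{G}$ that preserves these classes is an automorphism of $\graph{G}$ if and only if its restriction to $\set{V}_{\tn{X}}\sqcup\set{Q}$ is an automorphism of $\graph{G}_{\tn{X}}$ and its restriction to $\set{V}_{\tn{Z}}\sqcup\set{Q}$ is an automorphism of $\graph{G}_{\tn{Z}}$. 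The reason is that every edge of $\graph{G}$ lies in exactly one of the two subgraphs, and $\phi$ maps each subgraph's vertex set to itself.

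Next we check that $\tilde f_\sigma$ preserves the classes and that its restrictions are the maps of \cref{prop:automorphisms-classical-Tan}. Since $f\in\Aut(\graph{I}(\set{X}))$ maps $\set{V}_0$ to $\set{V}_0$, $\set{V}_1$ to $\set{V}_1$, and $\set{Q}$ to $\set{Q}$, the map $\tilde f_\sigma$ sends $(v,t)\mapsto(f(v),\sigma_v(t))$ within each check class and $q\mapsto f(q)$ on data. Restricting $f$ to $\graph{I}(\graph{G}^\square_0)$, which is the subgraph on $\set{V}_0\sqcup\set{Q}$, gives an automorphism $f_0$ of the base graph $\graph{B}_0=\graph{I}(\graph{G}^\square_0)$. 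This holds because $\graph{I}(\graph{G}^\square_0)$ is the induced subgraph of $\graph{I}(\set{X})$ on those classes. Similarly, restricting to $\set{V}_1\sqcup\set{Q}$ gives $f_1\in\Aut(\graph{I}(\graph{G}^\square_1))$. Moreover, $\tilde f_\sigma$ restricted to $\graph{G}_{\tn{X}}$ is exactly $(\widetilde{f_0})_{\sigma|_{\set{V}_0}}$ in the notation of \cref{prop:automorphisms-classical-Tan}, and likewise for $\tn{Z}$. The incident-edge set $\delta(v)$ of $v\in\set{V}_i$ in $\graph{I}(\set{X})$ coincides with its incident-edge set in $\graph{I}(\graph{G}^\square_i)$. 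Therefore $f\at{\delta(v)}=f_i\at{\delta(v)}$, and it is read through the labelings $\mu$ into the data nodes of the local Tanner graph.

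Applying \cref{prop:automorphisms-classical-Tan} to each restriction gives two equivalences. First, $\tilde f_\sigma|_{\graph{G}_{\tn{X}}}\in\Aut(\graph{G}_{\tn{X}})$ if and only if $\sigma_v\sqcup f\at{\delta(v)}\in\Aut(\graph{G}_{\mat{G}_\tn{A}\otimes\mat{G}_\tn{B}})$ for all $v\in\set{V}_0$. Second, the analogous statement holds for $\tn{Z}$ with $\mat{H}_\tn{A}\otimes\mat{H}_\tn{B}$ and $v\in\set{V}_1$. Combining these with the first paragraph gives the theorem.

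The main obstacle is the bookkeeping in the second paragraph. One must check that the base graph of each classical Tanner code satisfies the hypotheses of \cref{prop:automorphisms-classical-Tan}: it must be biregular, and its labeling must restrict to bijections $\delta(v)\cong\set{A}\times\set{B}$ via the corner labels $\mu_v$. One must also check that $f\at{\delta(v)}$ is interpreted, through these labels, as a bijection between local data nodes, as the proposition requires. Both follow from the definition of the square complex, since each vertex of $\set{V}_i$ lies in exactly $\ecard{\set{A}}\ecard{\set{B}}$ squares labeled bijectively, and each square has exactly two corners in each $\set{V}_i$.
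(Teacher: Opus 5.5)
Your proposal is correct and is essentially the paper's argument. The paper only remarks that Theorem~\ref{thm:aut-QT} ``follows the same way as'' Proposition~\ref{prop:automorphisms-classical-Tan}; you make this precise by splitting the CSS Tanner graph into the $\tn{X}$- and $\tn{Z}$-type Tanner graphs sharing only the data vertices $\set{Q}$, applying Proposition~\ref{prop:automorphisms-classical-Tan} to the restrictions of $f$ to $\graph{I}(\graph{G}^\square_0)$ and $\graph{I}(\graph{G}^\square_1)$, and gluing via the fact that a class-preserving vertex bijection is an automorphism if and only if both restrictions are, which the paper leaves implicit and which you justify soundly.
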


It is sometimes easier to picture the graphs $\graph{G}^\square_0$, $\graph{G}^\square_1$ rather than the square complex $\mathcal{X}$, and automorphisms on $\graph{I}(\graph{G}^\square_0)$ and $\graph{I}(\graph{G}^\square_1)$ form automorphisms on $\graph{I}(\set{X})$ if they agree on $\set{Q}$. Quantum Tanner codes from the bipartite construction have $\graph{G}^\square_0 \cong \graph{G}^\square_1$, with the isomorphism given, for example, by the identity map on the underlying group when $\set{X}$ is a left-right Cayley complex. However, the map $\phi$ from the edges of $\graph{G}^\square_0$ to the edges of $\graph{G}^\square_1$ induced by the construction is not a graph automorphism. The connection between the automorphism group of the square complex and these graphs is given in \cref{prop:Aut-X-to-Gs}. 
In particular, an automorphism $g\in \Aut(\graph{I}(\set{X}))$ is uniquely defined by it's restriction to $\graph{I}(\graph{G}_0^\square)$.

\begin{proposition}
  \label{prop:Aut-X-to-Gs}
  An automorphism $f\in\Aut(\graph{I}(\graph{G}^\square_0))$ is a restriction of an automorphism on $\graph{I}(\mathcal{X})$ if and only if $\phi f|_\set{Q} \inv{\phi}$ induces a function on the vertices of $\graph{G}^\square_1$. 
  In that case, the induced function is the only extension of $f$ to $\graph{I}(\mathcal{X})$.
\end{proposition}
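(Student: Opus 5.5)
The plan is to unwind what it means for an automorphism of $\graph{I}(\set{X})$ to restrict to $f$, and to show that its only additional content is a vertex map on $\set{V}_1$ compatible with the action of $f$ on squares. Throughout I use that $\set{Q}$ is the common vertex set of the square side of both $\graph{I}(\graph{G}^\square_0)$ and $\graph{I}(\graph{G}^\square_1)$. The map $\phi$ sends the square $q$, viewed as the edge between its two $\set{V}_0$-corners, to the same square viewed as the edge between its two $\set{V}_1$-corners. Consequently $\phi f|_\set{Q}\inv{\phi}$ is a bijection on $\set{E}(\graph{G}^\square_1)$. Saying that it \emph{induces a function on vertices} I read as follows: there is a map $g\colon\set{V}_1\to\set{V}_1$ such that, for every square $q$ with $\set{V}_1$-corners $\{u,u'\}$, the $\set{V}_1$-corners of $f(q)$ are $\{g(u),g(u')\}$.

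For necessity, let $F\in\Aut(\graph{I}(\set{X}))$ extend $f$, and set $g=F|_{\set{V}_1}$, which is well defined because we only allow automorphisms preserving $\set{V}_1$. Since $F$ preserves incidence between $\set{V}_1$ and $\set{Q}$, a corner $u\in\set{V}_1$ of $q$ is sent to a corner $g(u)$ of $F(q)=f(q)$. Hence the $\set{V}_1$-corners of $f(q)$ are exactly $\{g(u),g(u')\}$, and $\phi f|_\set{Q}\inv{\phi}$ is induced by $g$.

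For sufficiency, take the inducing map $g$ and define $F=f$ on $\set{V}_0\sqcup\set{Q}$ and $F=g$ on $\set{V}_1$. Incidences between $\set{V}_0$ and $\set{Q}$ are preserved because $f\in\Aut(\graph{I}(\graph{G}^\square_0))$. Incidences between $\set{V}_1$ and $\set{Q}$ are sent to incidences by the defining property of $g$.

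The main obstacle is showing that $F$ is bijective and that non-incidences are also preserved; the only nontrivial point is bijectivity of $g$. I would argue as follows.
\begin{itemize}
\item Every $u\in\set{V}_1$ is a corner of some square, since $\graph{G}^\square_1$ is $\ecard{\set{A}}\ecard{\set{B}}$-regular.
\item Because the edge map is surjective, such a square equals $f(q)$ for some $q$, so $u$ lies in $\{g(v),g(v')\}$. Thus $g$ is surjective, and hence bijective since $\set{V}_1$ is finite.
\item Each square has exactly two distinct $\set{V}_1$-corners, because $agb\neq g$. So $F$ maps the two-element corner set of $q$ bijectively onto that of $f(q)$.
\item Combined with $F$ being a bijection on $\set{Q}$, this shows that incidence between $\set{V}_1$ and $\set{Q}$ holds if and only if it holds after applying $F$.
\end{itemize}

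For uniqueness, any extension $F'$ of $f$ must satisfy $F'(\{u,u'\})=\{g(u),g(u')\}$ for every square. To pin down $F'(u)=g(u)$ pointwise, I would use that $u$ lies in at least two squares whose other $\set{V}_1$-corners differ (degree at least $2$ with distinct neighbours). Then $F'(u)$ lies in the intersection of two such pairs, and that intersection is $\{g(u)\}$ by injectivity of $g$. If this neighbourhood condition is in doubt, I would instead note that $g$ itself is determined pointwise by the same intersection argument, which gives uniqueness of the induced function and hence of the extension.
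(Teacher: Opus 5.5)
Your proposal is correct and follows the same route as the paper. The paper's proof only observes that an extension must send squares sharing a $\set{V}_1$-corner to squares sharing a $\set{V}_1$-corner, so that $\phi f|_{\set{Q}}\inv{\phi}$ must be induced by a vertex map on $\set{V}_1$; you supply the details it leaves implicit, namely bijectivity of $g$ and distinctness of the two $\set{V}_1$-corners, which really uses $ag\neq gb$, i.e., the hypothesis applied to $\inv{b}\in\set{B}$. For uniqueness, your neighbourhood condition holds whenever $|\set{A}|\geq 2$ or $|\set{B}|\geq 2$, because the $\graph{G}^\square_1$-neighbours of $(h,1)$ are the vertices $(ahb,1)$ with $a\in\set{A}$, $b\in\set{B}$; so your circular fallback is unnecessary, and when $|\set{A}|=|\set{B}|=1$ uniqueness genuinely fails.
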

\begin{IEEEproof}
  $f$ is an invertible function on the vertices of $\set{V}_0$ and the squares of $\set X$. To extend $f$ to $\set{X}$, squares sharing vertices in $\set{V}_1$ must be mapped to squares sharing vertices in $\set{V}_1$. The induced map is $\phi f\inv{\phi}$ on the edges of $\graph{G}^\square_1$.
\end{IEEEproof}



When $\set X$ is a left-right Cayley complex of an abelian group $\mathcal{G}$, the group $\mathcal{G}$ acts on $\set{X}$ by group multiplication in a way that preserves edge labels. For example, an edge $(g, ag)$ is mapped by $h\in \mathcal{G}$ to an edge $(hg, ahg)$, both of which are labeled by $a$ (and $a^{-1}$). The induced map $\tilde{h}$ on $\graph{G}$ therefore respects any local codes, leading to \cref{thm:autQT-ab-Cay}. Here, we say that a quantum Tanner code is constructed from the group $\mathcal{G}$ if the underlying square complex $\set{X}$ is a left-right Cayley complex on $\mathcal{G}$ such as in \cref{sec:prelim-QT}. In \cite{MostadRosnesLin25_1}, it is shown that, in a more general setting, there is still an underlying group, in which case we say the code is constructed from that group. Also, recall that quadripartite codes constructed from the group $\set{G}$ can be seen as codes from the bipartite construction on $\set{G}\times \set{C}_2$.

\begin{theorem}
  \label{thm:autQT-ab-Cay}
  Let $\graph{G}$ be the Tanner graph of a quantum Tanner code constructed from the group $\mathcal{G}$ using the bipartite construction. If $\mathcal{G}$ is abelian, then $\mathcal{G}\subseteq \Aut(\graph{G})$.
\end{theorem}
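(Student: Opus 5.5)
The plan is to exhibit, for each $h\in\set{G}$, an automorphism $f_h\in\Aut(\graph{I}(\set{X}))$ that preserves $\set{V}_0$, $\set{V}_1$ and $\set{Q}$, together with all corner labels. We then apply Theorem~\ref{thm:aut-QT} with every $\sigma_v=\operatorname{id}$, in the same way that Corollary~\ref{cor:nice-automorphisms} follows from Proposition~\ref{prop:automorphisms-classical-Tan}. The last step is to check that $h\mapsto\tilde{f}_h$ is injective and a homomorphism, so that $\set{G}$ embeds as a subgroup of $\Aut(\graph{G})$.

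First, define $f_h$ by left multiplication: $(g,i)\mapsto(hg,i)$ on vertices, and on squares
\begin{equation*}
  ((g,0),(ag,1),(gb,1),(agb,0))\mapsto((hg,0),(ahg,1),(hgb,1),(ahgb,0)).
\end{equation*}
Because $\set{G}$ is abelian, $hag=ahg$ and $h(agb)=a(hg)b$, so the image is again a square of $\set{X}$ with the same parameters $(a,b)$ and base point $hg$. Hence $f_h$ preserves $\set{V}_0$, $\set{V}_1$ and $\set{Q}$, and it preserves vertex--square incidence. It is a bijection with inverse $f_{\inv{h}}$, so it lies in $\Aut(\graph{I}(\set{X}))$.

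Second, check label preservation. At the corner $(g,0)$ the square has label $(a,b)$, and its image has label $(a,b)$ at the corner $(hg,0)$. The same holds at the other three corners, with labels $(\inv{a},b)$, $(a,\inv{b})$ and $(\inv{a},\inv{b})$. Commutativity is needed here as well: the $\set{V}_1$-corner $(ag,1)$ must be recognized as $(a(hg),1)$. Consequently, for every $v$ the restriction $f_h|_{\delta(v)}$, viewed through the labels $\mu_v$ and $\mu_{f_h(v)}$, is the identity on $\set{A}\times\set{B}$, which indexes the data nodes of the local Tanner graphs. Therefore $\operatorname{id}\sqcup f_h|_{\delta(v)}$ is the identity map on $\graph{G}_{\mat{G}_\tn{A}\otimes\mat{G}_\tn{B}}$ for $v\in\set{V}_0$ and on $\graph{G}_{\mat{H}_\tn{A}\otimes\mat{H}_\tn{B}}$ for $v\in\set{V}_1$, and so it is an automorphism in both cases. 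Theorem~\ref{thm:aut-QT} then gives $\tilde{f}_h\in\Aut(\graph{G})$.

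Finally, $\tilde{f}_{h}\tilde{f}_{h'}=\tilde{f}_{hh'}$, and $\tilde{f}_h$ moves the check $((1,0),t)$ to $((h,0),t)$, which is different when $h\neq1$. So $h\mapsto\tilde{f}_h$ is an injective homomorphism, and this is what $\set{G}\subseteq\Aut(\graph{G})$ means. For a quadripartite code, we apply the argument to $\set{G}\times\set{C}_2$, which is abelian whenever $\set{G}$ is.

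The main obstacle is the label bookkeeping at the $\set{V}_1$ corners and the identification of $\delta(v)$ with the local data nodes. We must make sure that the labeling $\mu$ used in $\tn{Tan}(\graph{I}(\graph{G}^\square_i),\cdot)$ is exactly the corner labeling. If it is, label preservation reduces to the commutativity identities above. If instead the data qubits are the squares themselves, we must also confirm that the identity on $\set{Q}$-labels is compatible with the map $\phi$ of Proposition~\ref{prop:Aut-X-to-Gs}. That compatibility again follows from $f_h$ acting uniformly by left multiplication.
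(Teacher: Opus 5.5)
Your proposal is correct and follows essentially the paper's argument. The paper lets $\set{G}$ act on $\set{X}$ by multiplication, observes that commutativity makes this action preserve the edge/corner labels, and concludes that the induced $\tilde{h}$ respects the local codes, i.e., Theorem~\ref{thm:aut-QT} with $\sigma_v=\operatorname{id}$ as in Corollary~\ref{cor:nice-automorphisms}. Your corner-by-corner label check and the faithfulness/homomorphism verification only make explicit what the paper leaves implicit, and the closing worry about $\phi$ is unnecessary because you already define $f_h$ on all of $\graph{I}(\set{X})$ and check incidence with both $\set{V}_0$ and $\set{V}_1$.
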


More generally, let $C_{\set G}(\set A) \subseteq \set G$ denote the centralizer of $\set A$ in $\set G$, that is, $C_{\set G}(\set A) = \{g \in \set G : ag = ga \tn{ for all } a \in \set A\}$. Then $C_{\set G}(\set A)$ acts on the square complex $\set X$ on the left, and $C_{\set G}(\set B)$ acts on $\set X$ on the right. 

\begin{example}
  \label{ex:QT-45-7-4}
  A $[[45,7,4]]$ quantum Tanner code can be constructed as follows. Take $\mathcal{G}=\set{C}_{10}$, $\set{A}=\{7,0,3\}$, $\set{B} = \{4,5,6\}$, and 
  \begin{equation*}
    \mat{H}_\tn{A} = \mat{G}_\tn{B} = \begin{bmatrix}
      1&0&1 \\ 0&1&1
    \end{bmatrix}, 
    \quad
    \mat{G}_\tn{A} = \mat{H}_\tn{B} = \begin{bmatrix}
      1&1&1
    \end{bmatrix}.
  \end{equation*}
  This leads to a maximum row and column weight of $\Delta=6$, as we have made sure that the third element of $\set{A}$ and $\set{B}$, corresponding to the heavy column of $\mat{H}_\tn{A}$ and $\mat{G}_\tn{B}$, is not self-inverse. Using the (original) bipartite construction, we get $n=\frac{1}{2}|\set{G}||\set{A}||\set{B}| = 45$ and $k\geq \frac{1}{2}|\set{G}|\left(|\set{A}||\set{B}| - 4 k_{\textnormal{A}}k_\textnormal{B}\right)=5$. Since $\mathcal{G}$ is abelian, \cref{thm:autQT-ab-Cay} tells us that $\mathcal{G}\subseteq \Aut(\graph{G})$. One can construct a depth-$6$ schedule for the code using LocalASC with a subgroup $\set{C}_5\subseteq\set{G}$, thereby reducing the coloring problem from $240$ to $48$ variables.

  The $[[325,14,9]]$, $[[666,4,d\geq 13]], [[666,8,d\geq 13]]$, $[[1225,32,d\geq 13]]$, and $[[1225,225,6]]$ quantum Tanner codes are constructed similarly to the $[[45,7,4]]$ code, in such a way that $\graph{G}_0^\square$ and $\graph{G}_1^\square$ both are complete graphs. In particular, these graphs have no parallel edges, which is believed to be beneficial for belief propagation decoders~\cite{MostadRosnesLin26_1sub}. The underlying groups are all cyclic, meaning that \cref{thm:autQT-ab-Cay} applies, giving the automorphism groups $\set{C}_{26}$, $\set{C}_{37}$, $\set{C}_{37}$, $\set{C}_{50}$, and $\set{C}_{50}$, respectively. Note that none of these codes are quadripartite, so the approach of~\cite{NguyenRimbach-RussBosco26_1sub} does not apply to them.
\end{example}

In \cref{ex:QT-45-7-4}, the quantum Tanner codes are constructed from Cayley graphs following the construction described in \cref{sec:prelim-QT}. Similar to how coset codes generalize two-block group-algebra codes, we can construct quantum Tanner codes from certain non-Cayley graphs. Then the subgroup obtained from \cref{thm:autQT-ab-Cay} will be somewhat smaller.


\begin{example}\label{ex:QT-180-26-6}
  Consider the $[[180, 26, 6]]$ code from \cite{MostadRosnesLin25_1}. It is constructed using the same local codes $\mat{H}_\tn{A}, \mat{H}_\tn{B}, \mat{G}_\tn{A}, \mat{G}_\tn{B}$ as in \cref{ex:QT-45-7-4}, leading to a maximum row and column weight of $\Delta(\graph{G})=6$. 
  The square complex $\mathcal{X}$ is constructed using the Petersen graph, a non-Cayley graph made by joining two $5$-cycles to make a $3$-regular graph with $10$ vertices and girth $5$. As such, \cref{thm:autQT-ab-Cay} tells us that $\set{C}_{5}\times \set{C}_2\cong\set{C}_{10}\subseteq \Aut(\graph{G})$, since the quadripartite construction is used. 
  LocalASC finds a depth-$6$ schedule for syndrome extraction using both the subgroups $\set{C}_5$ and $\set{C}_{10}$.
  
\end{example}

\begin{remark}
  A partition of the $\set V_\tn{X}$, $\set V_\tn{D}$, and $\set V_\tn{Z}$ vertices of a Tanner graph $\graph{G}$ defines a quotient graph $\quotient{\graph{G}}{\mathcal{R}}$ that can be viewed as a Tanner graph of a CSS code. As all edges of $\graph{G}$ go between data vertices and check vertices, each edge of $\graph{G}$ is mapped to an edge of the quotient graph. Therefore, if no two edges sharing an endpoint are mapped to the same edge in the quotient graph, an edge coloring on $\quotient{\graph{G}}{\mathcal{R}}$ defines an edge coloring on $\graph{G}$. However, a quantum-valid ordered coloring on $\quotient{\graph{G}}{\mathcal{R}}$ does not necessarily give a quantum-valid ordered coloring on $\graph{G}$ (see \cref{sec:NOT-enough_proper-edge-coloring} for an example). We therefore instead consider the graph $\graph{G}$ directly, demanding that edges identified in $\quotient{\graph{G}}{\mathcal{R}}$ have the same color. In principle, any partition on the edges of $\graph{G}$ respecting $\tn{X}$ and $\tn{Z}$ edges could be used. When the partition comes from a partition of vertices, the approach is equivalent to adding certain extra constraints on the ordered coloring of $\graph{G}/\mathcal{R}$ to make it lift to a quantum-valid coloring of $\mathcal{G}$. 
  Each even parity constraint on $\graph{G}$ maps to an even parity constraint on $\graph{G}/\mathcal{R}$, possibly with many redundant constraints.
\end{remark}

For some quantum Tanner codes, there are apparent ways to partition the edges of their Tanner graphs. For quantum Tanner codes defined on a quadripartite square complex 
$$\set X = (\set{V}_{00}\sqcup\set{V}_{10}\sqcup\set{V}_{01}\sqcup\set{V}_{11}, \set{E}_\tn{A}\sqcup\set{E}_\tn{B}, Q),$$ 
such as those constructed using the quadripartite construction, the following partition does not identify adjacent edges, and is therefore valid. Check nodes of $\graph{G}$ coming from the same set of vertices $\set{V}_{ij}$ and the same row of the corresponding local parity-check matrix are identified, as well as variable nodes with the same labels on the same vertex sets, i.e., $((g,00), (ag,10), (gb,01), (agb,11)) \sim ((h,00), (ah,10), (hb,01), (ahb,11))$.

The edges of quantum Tanner codes where $\set A$ and $\set B$ contain no self-inverse elements can be partitioned according to whether they connect to an $\tn{X}$ or $\tn{Z}$ type check and their label, which is the label of the variable node in the local view of the check node it connects to. Note that this partition is not necessarily induced by any partition of the vertices.


In the last case, a quantum-valid coloring of the two 0-rate CSS codes $(\mat{H}_\tn{A}, \mat{G}_\tn{A})$, $(\mat{H}_\tn{B}, \mat{G}_\tn{B})$ will often lift to a quantum-valid coloring of the quantum Tanner code. The depth of this circuit is the product of the two circuit depths, and since the column and row weights are products of the local codes' column and row weights, the resulting circuit may be depth-optimal. In this way, one can design codes with optimal-depth circuits by choosing local codes with optimal-depth circuits. 

\begin{theorem}
  \label{thm:achievable-depth_qTanner-local-codes}
  Quantum-valid ordered colorings for the CSS codes $(\mat{G}_\tn{A}, \mat{H}_\tn{A})$, $(\mat{G}_\tn{B}, \mat{H}_\tn{B})$ of depth $\lambda_1$ and $\lambda_2$, define a quantum-valid coloring of depth $\lambda_1 \lambda_2$ for any quantum Tanner code with local codes defined by
  \begin{equation*}
    \code C_\tn{A} = \tn{ker}(\mat{H}_\tn{A}) = \tn{im}(\trans{\mat{G}}_\tn{A}),
    \quad
    \code C_\tn{B} = \tn{ker}(\mat{H}_\tn{B}) = \tn{im}(\trans{\mat{G}}_\tn{B}),
  \end{equation*}
  if the local view of variable node $b$ in $\graph{G}(\mat{G}_\tn{B}, \mat{H}_\tn{B})$ shares no colors with the local view of $\inv{b}$ for all $b\in \set{B}$.
\end{theorem}
\begin{IEEEproof}
  See Appendix~\ref{sec:proof_achievable-depth_qTanner-local-codes}.
\end{IEEEproof}

For there to exist a coloring of $\graph{G}(\mat{G}_\tn{B}, \mat{H}_\tn{B})$ where $b$ and $b^{-1}$ does not share any colors, we need $\set{B}$ to have no self-inverse elements. This is the case, for example, for the $[[666, 8, \geq 13]]$ quantum Tanner code. For $\set{B}$ to contain no self-inverse elements, it must have even order, which is the case for example for all codes in~\cite{LeverrierRozendaalZemor25_1sub}, which are also quadripartite. 
We suspect it may be possible to apply a similar strategy to that in \cref{thm:achievable-depth_qTanner-local-codes} to the quadripartite case.

\section{Numerical Results}
\label{sec:numerical-results}

We apply LocalASC and ASC to selected pairs of measured CSS PCMs. Both LocalASC and ASC use the CP-SAT solver in OR-Tools~\cite{PerronDidier26_1soft}. The LocalASC runtimes reported in Table~\ref{tab:LocalASC-vs-ASC_depth6} include computing $\Aut(\graph{G})$ using \texttt{nauty}~\cite{McKayPiperno14_1}, selecting a subgroup, and solving the resulting orbit-reduced model. The reported ASC runtime includes building and solving the unreduced model with the solver of~\cite{Zhang-etal26_1sub}. Because the parallel CP-SAT search is nondeterministic, the reported LocalASC and ASC runtimes for each completed instance are averaged over $10$ independent runs. Instances for which no ASC-feasible schedule is found within the $30$ minutes time limit are not repeated and are reported as $>30$\,min. All computations were performed on a single machine with an $8$-core Apple M$3$ processor and $24$~GB of memory. Moreover, we verified all exact minimum distances reported in this paper using the distance-computation methods of~\cite{ChenJafariLai26_1sub}. Table~\ref{tab:LocalASC-vs-ASC_depth6} lists the CSS code instances for which LocalASC finds an orbit assignment whose lift satisfies all QCEC constraints. Consequently,
$\Qchi(\mat{H}_{\tn{X}},\mat{H}_{\tn{Z}})=w=\Delta(\graph{G})$
for every entry. Each schedule is lower-bound-saturating and depth-optimal, while each pair of measured CSS PCMs is weight-optimal and has $\Qgap=0$.
 
To assess the circuit-level performance of the syndrome-extraction schedules, we simulate memory circuits using \texttt{stim}~\cite{Gidney21_1}. For $\tn{Z}$ (resp. $\tn{X}$)-basis experiments, we initialize the data qubits in the $\ket{0}$ (resp. $\ket{+}$) state and perform $d$ rounds of syndrome extraction following the obtained schedule for a quantum code of minimum distance $d$. The $\tn{Z}$-check (resp. $\tn{X}$-check) ancillas are initialized in $\eket{0}$ (resp. $\eket{+}$) and act as the target (resp. control) of the CNOT gates. Ancillas are measured and reinitialized at each round. We perform a final readout of all data qubits in the corresponding basis at the end of the circuit. To do a fair comparison with the numerical implementation of coset-based codes~\cite{AydinTamoBarg26_1sub}, we also do not consider a perfect final round in the experiments, in contrast to other implementations of memory experiments~\cite{Bravyi-etal24_1}.
  
Since we extract both $\tn{X}$- and $\tn{Z}$-check syndromes, the resulting detector error model~\cite{DerksTownsend-TeagueBurchardsEisert25_1} becomes considerably large, particularly under the circuit-level noise model we consider. Here, all single- and two-qubit gates, state initializations, measurements, and idle qubits in any circuit tick are prone to errors. All noise locations have the same physical error rate $p$. Specifically, single-qubit gates and idle qubit locations are followed by a single-qubit depolarizing channel with probability $p$; two-qubit gates are followed by a two-qubit depolarizing channel with probability $p$; initializations in the $\tn{Z}$ (resp. $\tn{X}$) basis are followed by an $\tn{X}$ (resp. $\tn{Z}$) error with probability $p$, and measurement outcomes are flipped with probability $p$. Note that for some codes, there are ways to reduce idle positions over several rounds of syndrome extraction by delaying the initialization of one of the groups of ancillas ($\tn{Z}$ or $\tn{X}$) and starting to perform CNOTs while the second group of ancillas is initialized, as done in the syndrome-extraction circuits of BB codes~\cite{Bravyi-etal24_1} or in the construction of left-right circuits~\cite{StrikisBrowneBeverland26_1sub}.
  
For performance benchmarking, we run memory experiments as described above using Higgott's \texttt{stimbposd}~\cite{Higgott23_1} implementation of belief propagation with ordered statistics decoding and combination-sweep of order $10$ (BP+OSD-CS10). We run $1000$ iterations of BP and at most $2\times10^6$ samples for every noise level considered. All experiments use the same code and differ only in the syndrome-extraction schedule. The logical error rate (LER) per cycle in a $\tn{W}$-basis experiment we compute is defined by Aydin \textit{et al.}~\cite[Sec.~VI]{AydinTamoBarg26_1sub} as
\begin{equation*}
  p_{\tn{L},\tn{W}}(p) = 1 -\bigg(1 - \frac{N_{\tn{e},d,\tn{W}} (p)}{N_{\tn{s},d,\tn{W}}(p)} \bigg)^{1/d},\quad \tn{W} \in \{\tn{X},\tn{Z}\}.
\end{equation*}
Here, $N_{\tn{e},d,\tn{W}}(p)$ is the number of logical errors that occurred in a Monte-Carlo simulation of the $\tn{W}$-basis experiment with $ N_{\tn{s},d,\tn{W}}(p)$ simulated trials and physical error rate $p$. Additionally, we also report the LER per cycle as the complement to the simultaneous success probability $$p_\tn{L} = 1 - (1-p_{\tn{L},\tn{X}})(1-p_{\tn{L}, \tn{Z}}).$$ Like~\cite{AydinTamoBarg26_1sub}, the computed pseudo-threshold solves the break-even condition $$p_\tn{L} (p_{\tn{th}}) = 1 -(1-p_{\tn{th}})^{k}$$ and the heuristic fitting formula for the sub-threshold regime is quadratic in $p$ on the exponent $$\tilde{p}_\tn{L}(p) = p^{d_{\tn{circ}} /2} e ^{\alpha + \beta p + \gamma p^2},$$ where $d_\tn{circ}$ is the \textit{overall} circuit distance defined below. Since the schedules produced by LocalASC are solver dependent, we go through the same circuit filtering step as Zhang \textit{et al.}~\cite{Zhang-etal26_1sub}. In this step, we run LocalASC a number of times. For each schedule found, we run a decoding experiment with $10^{4}$ trials at a single noise level $p=0.001$. At the end, we select the schedule that induced the lowest LER per cycle. For Fig.~\ref{fig: 48_4_8_filtering}, we ran the filtering for $50$ steps to select the best performing schedule. The results of the memory experiments for the $[[48,4,8]]$ coset-based code~\cite{AydinTamoBarg26_1sub} reported in Fig.~\ref{fig:48_4_d8_depth-schedule-comparison} use the schedules obtained from the filtering step. We consider only points below the computed pseudo-threshold in the plot.
  
To analyze how different schedules preserve the code's error-correcting capabilities, we consider the circuit distance, as in~\cite{StrikisBrowneBeverland26_1sub}. In general, the circuit distance is the minimum weight of an error $\vect{e}$ that does not flip any detectors but flips at least one observable~\cite{DerksTownsend-TeagueBurchardsEisert25_1}. That is, for a detector error matrix $\mat{H}_{\tn{dem}}$ and a corresponding observables matrix $\mat{L}$, the circuit distance is given by $\min |\vect{e}|$ such that $\vect{e} \in \ker(\mat{H}_{\tn{dem}})\setminus \ker(\mat{L})$. Strikis \textit{et al.} build on this definition and use the \textit{overall} circuit distance $d_{\tn{circ}} = \min\{d_{\tn{X},\tn{circ}}, d_{\tn{Z},\tn{circ}}\}$, as it determines the leading-order scaling of the logical failure probability. Here, $d_{\tn{W}, \tn{circ}}$ is the circuit distance of a $\tn{W}$-basis memory experiment with $d$ rounds of syndrome extraction for $\tn{W}\in\{\tn{X},\tn{Z}\}$. In the simulations, we estimate the circuit distance by using \texttt{QDistRndMW} from Webster's \texttt{codeDistance} package~\cite{WebsterJacobHiggott26_1sub} running at most $5000$ iterations. Using this, we found an upper bound of $5$ for all $[[48,4,8]]$ schedules, which we use in the extrapolation.
  
  
\begin{figure}[htbp!]
  \centering
  \begin{tikzpicture}
    \begin{semilogyaxis}[
      width=8.5cm,
      height=6cm,
      ymin=4e-5, ymax=8e-5,
      log origin=infty,
      xtick={0,10,20,30,40,49},
      xtick distance=10,
      xtick={0,10,20,30,40,49},
      ytick={4e-5,5e-5,6e-5,7e-5,8e-5},
      yticklabels={$4$,$5$,$6$,$7$,$8$},
      minor x tick num=1,
      ylabel={LER per cycle ($\times 10^{-5}$)},
      xlabel={Schedules ordered by estimated LER},
      grid=major,
      grid style={dashed, draw=gray!30},
      enlarge x limits=0.02
      ]
      
      \addplot [
      ybar,                 
      bar width=4.5pt,       
      fill=blue!20,     
      draw=blue!40!gray,   
      error bars/.cd,
      y dir=both, 
      y explicit,
      error bar style={thick, solid, black!80} 
      ] table [
      x=x_index, 
      y=total_pL, 
      y error=total_pL_err, 
      col sep=comma
      ] {plot_data/filtered_48_4_8_circuits.csv};
      
      \addplot [
      color=cyan!60!green, 
      thick,
      mark=*,
      mark size=1pt,
      mark options={fill=cyan!60!green, draw=cyan!60!green}
      ] table [
      x=x_index, 
      y=total_pL, 
      col sep=comma
      ] {plot_data/filtered_48_4_8_circuits.csv};
    \end{semilogyaxis}
  \end{tikzpicture}
  \caption{LocalASC filtering with $50$ steps for the $[[48,4,8]]$ coset-based code with fixed seed for noise in memory at physical noise level $p=0.001$.}
  \label{fig: 48_4_8_filtering}
\end{figure}

\begin{figure}[htbp!]
  \centering
  \begin{tikzpicture}
    \begin{loglogaxis}[
      width=8.5cm,
      height=8cm,
      xmin=9e-5, xmax=0.01,
      ymin=1e-9, ymax=1e-1,
      xlabel={Physical error rate ($p$)},
      ylabel={LER per cycle ($p_\tn{L}$)},
      grid=both,
      grid style={dashed, draw=gray!40},      
      xtick={1e-4, 1e-3, 1e-2},
      ytick={1e-13,1e-11, 1e-9, 1e-7, 1e-5, 1e-3, 1e-1},
      legend pos=south east,
      legend style={
        at={(0.99,0.01)},
        anchor=south east,
        font=\small,
        cells={anchor=west},
        draw=gray},      
      ]
      
      
      \addplot [
      color=green!70!black, 
      thick, mark=none,
      legend image post style={mark=diamond*, mark size=2.5pt}
      ] table [x=p, y=pL, col sep=comma] {plot_data/48_4_d8_ATB_fit.csv};
      \addlegendentry{$[[48, 4, 8]]$ schedule~\cite{AydinTamoBarg26_1sub}, depth $7$}
      
      \addplot [
      color=green!70!black,
      only marks, forget plot,
      mark=diamond*, mark size=2.5pt,
      error bars/.cd,
      y dir=both, y explicit,
      error bar style={thick, solid}
      ] table [
      x=p, 
      y=total_pL, 
      y error=total_pL_err, 
      col sep=comma
      ]{plot_data/48_4_d8_ATB_data.csv};
      
      
      \addplot [
      color=blue, 
      thick, mark=none,
      legend image post style={mark=square*, mark size=1.5pt},      
      ] table [x=p, y=pL, col sep=comma] {plot_data/48_4_d8_fullASC_fit.csv};
      \addlegendentry{$[[48, 4, 8]]$ ASC~\cite{Zhang-etal26_1sub}, depth $6$}
      
      \addplot [
      color=blue,
      only marks, forget plot,
      mark=square*, mark size=1.5pt,
      error bars/.cd,
      y dir=both, y explicit,
      error bar style={dashed,thick}
      ] table [
      x=p, 
      y=total_pL, 
      y error=total_pL_err, 
      col sep=comma
      ]{plot_data/48_4_d8_fullASC_data.csv};
      
      
      \addplot [
      color=red, 
      thick, mark=none,
      legend image post style={mark=o, mark size=2.5pt}
      ] table [x=p, y=pL, col sep=comma] {plot_data/48_4_d8_localASC_fit.csv};
      \addlegendentry{ $[[48, 4, 8]]$ LocalASC, depth $6$}
      
      \addplot [
      color=red,
      only marks, forget plot,
      mark=o, mark size=2.5pt,
      error bars/.cd,
      y dir=both, y explicit,
      error bar style={thick, solid}
      ] table [
      x=p, 
      y=total_pL, 
      y error=total_pL_err, 
      col sep=comma
      ]{plot_data/48_4_d8_localASC_data.csv};              
      
    \end{loglogaxis}
  \end{tikzpicture}
  \vspace{-3ex}
  \caption{The same coset-based $[[48,4,8]]$ code memory experiments using BP+OSD-CS10 for circuits obtained from different scheduling methods. Circuits were selected after a filtering step when possible.} 
  \label{fig:48_4_d8_depth-schedule-comparison}
\end{figure}

In Fig.~\ref{fig:48_4_d8_depth-schedule-comparison}, the depth-$6$ LocalASC schedule selected by circuit filtering gives a lower estimated LER than both the depth-$6$ ASC schedule and the depth-$7$ schedule of Aydin \emph{et al.}~\cite{AydinTamoBarg26_1sub} at the same simulated noise levels.

For the $[[72,12,6]]$ BB code, we compare the depth-$7$ syndrome-extraction schedule of Bravyi \emph{et al.}~\cite{Bravyi-etal24_1} with a depth-$7$ LocalASC schedule using the same noise model and decoder settings. The LocalASC schedule gives a lower estimated LER than the schedule of Bravyi \emph{et al.} in these simulations. This benchmark complements the $[[48,4,8]]$ comparison by considering an instance with $\Qchi=7>\Delta=6$. The results are presented in Appendix~\ref{sec:LocalASC_versus_IBM_BB_72_12_d6}.

\section{Discussion and Conclusion}
\label{sec:discussion-conclusion}

In this paper, we formulated low-depth CSS syndrome extraction as an ordered edge-coloring problem with quantum parity constraints. The resulting QCEC number equals the minimum two-qubit depth in the standard single-ancilla model. We then introduced LocalASC, which reduces the QCEC constraint system to edge orbits under a vertex-class-preserving automorphism subgroup and lifts a feasible orbit assignment to the full Tanner graph. LocalASC finds depth-$w$ schedules for $18$ pairs of measured CSS PCMs from two-block and quantum Tanner codes, with $\Qchi=w=\Delta(\graph{G})$ and $w\in\{6,8,9,12\}$. All $18$ PCM pairs are weight-optimal and have zero quantum scheduling gap.

Moreover, QCEC and LocalASC depend on the measured CSS PCMs rather than solely on the stabilizer group they generate. Removing dependent checks preserves the row spaces but may change the Tanner graph and its automorphism group. Conversely, dependent checks may be added to increase the available symmetry, provided that the original checks are retained and the maximum Tanner graph degree $\Delta(\graph{G})$ is unchanged. Any schedule for the enlarged pair of PCMs can then be restricted to the original checks without increasing its depth. This observation motivates the joint selection of bounded-weight stabilizer generators and a quantum-valid edge coloring for them.

We remark that the infeasibility of the orbit-reduced ASC model at a given depth does not imply the infeasibility of the unreduced ASC model at the same depth. For the $7$ published IBM BB codes~\cite{Bravyi-etal24_1}, the translation subgroup reduces each scheduling instance to $12$ edge-orbit variables, independent of the code length, and LocalASC rapidly finds and verifies a depth-$7$ schedule. The optimality of these schedules is established by combining the feasible depth-$7$ schedules obtained using LocalASC with solver-certified infeasibility of the corresponding unreduced ASC instances at depth $6$, obtained using the solver of~\cite{Zhang-etal26_1sub}.\footnote{The abstract of~\cite{Zhang-etal26_1sub} states that the compiler ``certifies that no depth-$6$ syndrome-extraction circuit exists'' for these codes, whereas Sec.~V-A of the same paper describes the result more cautiously as ``concrete evidence supporting the conjecture.'' In the released implementation, \texttt{find\_min\_depth()} does not distinguish solver-certified infeasibility from a timeout: \texttt{solve()} returns \texttt{None} for both the \texttt{INFEASIBLE} and \texttt{UNKNOWN} statuses, after which the search proceeds to the next depth. In the computations, we record the CP-SAT status explicitly and accept only \texttt{INFEASIBLE} as establishing infeasibility at the corresponding depth.} 

Several research directions follow from this formulation. First, the code instances with $\Qchi=w=\Delta$ found here motivate constructing qLDPC codes whose measured CSS PCMs admit depth-$w$ syndrome extraction. We have begun to study such constructions and obtained preliminary results. Deriving algebraic conditions for $\Qgap=0$, especially for two-block coset-based and quantum Tanner code families, could help identify further weight-optimal instances when $w=\Delta$. Second, one may define a code-level invariant by minimizing $\Qchi$ over a specified class of bounded-weight generator presentations. Finally, following Zhang \emph{et al.}~\cite{Zhang-etal26_1sub}, we use circuit filtering to select the schedule with the lowest estimated LER among those found at a fixed depth. The resulting variation in LER among these schedules shows that depth alone does not determine logical error performance. This motivates extending the ASC constraints either to limit hook error propagation or to impose a lower bound on circuit distance when constructing low-depth syndrome-extraction schedules.

\appendices

\begin{figure*}[t!]
  \centering
  \subfloat[{The CSS Tanner graph of the $[[8,4,2]]$ code. The boxed qubit pairs are the common supports $\set{S}_{00}$, $\set{S}_{01}$, $\set{S}_{10}$, and $\set{S}_{11}$. $\set{S}_{00}$ is highlighted in orange.}]{%
    \begin{minipage}[t]{0.49\textwidth}
      \centering
      \begin{tikzpicture}[x=1cm,y=1cm,font=\fontsize{8}{9}\selectfont,
        line width=0.45pt]
        \node at (-3.2,3.12) {$\tn{X}$ checks};
        \node at (0,3.12) {data qubits};
        \node at (3.2,3.12) {$\tn{Z}$ checks};
        \coordinate (X0) at (-3.2,1.15);
        \coordinate (X1) at (-3.2,-1.15);
        \coordinate (Z0) at (3.2,1.15);
        \coordinate (Z1) at (3.2,-1.15);
        \foreach \i in {0,...,7}
        \coordinate (q\i) at (0,{2.25-1.35*floor(\i/2)-0.46*mod(\i,2)});
        \foreach \j/\support in {0/00,1/01,2/10,3/11} {
          \ifnum\j=0
            \def\supportcolor{orange!85!black}
          \else
            \def\supportcolor{gray!55}
          \fi
          \draw[draw=\supportcolor,rounded corners=2pt]
          (-0.29,{2.51-1.35*\j}) rectangle (0.29,{1.53-1.35*\j});
          \ifnum\j=0
            \node[text=orange!85!black,inner sep=0pt] at (0,{2.70-1.35*\j}) {$\set{S}_{\support}$};
          \else
            \node[inner sep=0pt] at (0,{2.70-1.35*\j}) {$\set{S}_{\support}$};
          \fi
        }
        \foreach \q in {q0,q1,q2,q3}
        \draw (X0)--(\q);
        \foreach \q in {q4,q5,q6,q7}
        \draw (X1)--(\q);
        \foreach \q in {q0,q1,q4,q5}
        \draw[dash pattern=on 2pt off 1.5pt] (Z0)--(\q);
        \foreach \q in {q2,q3,q6,q7}
        \draw[dash pattern=on 2pt off 1.5pt] (Z1)--(\q);
        \foreach \i in {0,1} {
          \node[rectangle,draw=black,fill=blue!12,
          minimum width=5.5mm,minimum height=4.5mm,inner sep=0pt]
          at (X\i) {$x_{\i}$};
          \node[rectangle,draw=black,fill=red!10,
          minimum width=5.5mm,minimum height=4.5mm,inner sep=0pt]
          at (Z\i) {$z_{\i}$};
        }
        \foreach \i in {0,...,7}
        \node[circle,draw=black,fill=gray!10,minimum size=4.1mm,inner sep=0pt]
        at (q\i) {$q_{\i}$};
      \end{tikzpicture}
    \end{minipage}%
    \label{fig:CSS-Tanner-graph_8_4_d2}%
  }\hfill
  \subfloat[{The quotient graph $\graph{G}/\set{H}_2$ for the action in~\eqref{eq:8_4_d2-good-rotation}. Each edge represents a size-$2$ edge orbit and carries the indicated orbit tick variable.}]{%
    \begin{minipage}[t]{0.49\textwidth}
      \centering
      \begin{tikzpicture}[x=0.92cm,y=1cm,font=\fontsize{8}{9}\selectfont,line width=0.45pt,
        qvertex/.style={circle,draw=black,fill=gray!10,minimum size=4.8mm,inner sep=0pt},
        ticklabel/.style={text=black,inner sep=0pt}]
        \node at (-3.2,2.35) {$\tn{X}$-check orbit};
        \node at (0,2.35) {data-qubit orbits};
        \node at (3.2,2.35) {$\tn{Z}$-check orbit};
        \node[rectangle,draw=black,fill=blue!12,minimum width=5.5mm,minimum height=4.5mm,inner sep=0pt] (X) at (-3.2,0) {$\set{X}$};
        \node[rectangle,draw=black,fill=red!10,minimum width=5.5mm,minimum height=4.5mm,inner sep=0pt] (Z) at (3.2,0) {$\set{Z}$};
        \node[below=3mm,inner sep=0pt] at (X.south) {$\{x_0,x_1\}$};
        \node[below=3mm,inner sep=0pt] at (Z.south) {$\{z_0,z_1\}$};
        \foreach \k/\y/\a/\b in {0/1.65/0/6,1/0.55/1/7,2/-0.55/2/4,3/-1.65/3/5} {
          \node[qvertex] (Q\k) at (0,\y) {$\set{Q}_{\k}$};
          \node[below=1.2mm,inner sep=0pt] at (Q\k.south) {$\{q_{\a},q_{\b}\}$};
          \draw (X) -- node[pos=0.55,above=2pt,ticklabel] {$\alpha_{\k}$} (Q\k);
          \draw[dash pattern=on 2pt off 1.5pt] (Q\k) -- node[pos=0.45,above=2pt,ticklabel] {$\beta_{\k}$} (Z);
        }
      \end{tikzpicture}
    \end{minipage}%
    \label{fig:quotient-TG_8_4_d2}%
  }
  \caption{The CSS Tanner graph of the $[[8,4,2]]$ code and its orbit reduction under $\set{H}_2$.}
  \label{fig:All-CSS-Tanner-graphs_8_4_d2}
\end{figure*}

\begin{figure*}[t!]
  \centering
  \definecolor{orbitTick0}{RGB}{0,114,178}
  \definecolor{orbitTick1}{RGB}{230,159,0}
  \definecolor{orbitTick2}{RGB}{0,158,115}
  \definecolor{orbitTick3}{RGB}{204,121,167}
  \newcommand{\quotientColorGraph}[1]{%
    \begin{tikzpicture}[x=1cm,y=1cm,font=\fontsize{8}{9}\selectfont,line width=0.65pt]
      \node[rectangle,draw=black,fill=blue!12,minimum width=5.5mm,minimum height=4.5mm,inner sep=0pt] (CX) at (-3.2,0) {$\set{X}$};
      \node[rectangle,draw=black,fill=red!10,minimum width=5.5mm,minimum height=4.5mm,inner sep=0pt] (CZ) at (3.2,0) {$\set{Z}$};
      \foreach \k/\b in {#1} {
        \node[circle,draw=black,fill=gray!10,minimum size=4.8mm,inner sep=0pt] (CQ\k) at (0,{1.05-0.7*\k}) {$\set{Q}_{\k}$};
        \draw[orbitTick\k] (CX) -- node[pos=0.55,sloped,above=2pt,text=black,inner sep=0pt] {$\alpha_{\k}=\k$} (CQ\k);
        \draw[orbitTick\b,dash pattern=on 2pt off 1.5pt] (CQ\k) -- node[pos=0.30,sloped,allow upside down,above=2.5pt,text=black,fill=white,inner sep=0.5pt] {$\beta_{\k}=\b$} (CZ);
      }
    \end{tikzpicture}
  }
  \begin{minipage}[t]{0.48\textwidth}
    \centering
    \subfloat[$(\kappa_0,\kappa_1)=(0,0)$: ASC-feasible after lifting.]{\quotientColorGraph{0/2,1/3,2/0,3/1}
      \label{fig:quotient-colorings_8_4_d2-valid}}
    \par\nobreak\smallskip
    {\fontsize{8}{10}\selectfont $(\eI{\alpha_k<\beta_k})_{k=0}^3=(1,1\,|\,0,0)$.\par $\kappa_0=1+1=0$, $\kappa_1=0+0=0$ in $\Field_2$.\par}
  \end{minipage}\hfill
  \begin{minipage}[t]{0.48\textwidth}
    \centering
    \subfloat[$(\kappa_0,\kappa_1)=(1,1)$: quotient-only feasible.]{\quotientColorGraph{0/1,1/0,2/3,3/2}
      \label{fig:quotient-colorings_8_4_d2-invalid}}
    \par\nobreak\smallskip
    {\fontsize{8}{10}\selectfont $(\eI{\alpha_k<\beta_k})_{k=0}^3=(1,0\,|\,1,0)$.\par $\kappa_0=1+0=1$, $\kappa_1=1+0=1$ in $\Field_2$.\par}
  \end{minipage}
  \par\medskip
  \begin{tikzpicture}[font=\fontsize{8}{9}\selectfont,line width=0.8pt]
    \foreach \k in {0,1,2,3} {
      \draw[orbitTick\k] ({1.5*\k},0) -- ++(0.35,0);
      \node[anchor=west,inner sep=2pt] at ({1.5*\k+0.35},0) {tick $\k$};
    }
  \end{tikzpicture}
  \caption{Two depth-$4$ colorings of the quotient graph in Fig.~\ref{fig:All-CSS-Tanner-graphs_8_4_d2}~\protect\subref{fig:quotient-TG_8_4_d2}, with $(\alpha_0,\alpha_1,\alpha_2,\alpha_3)=(0,1,2,3)$. Edge colors denote ticks, not orbit classes; numerical labels specify the same assignments. Solid and dashed edges carry $\alpha_k$ and $\beta_k$, respectively. Both colorings are proper and satisfy $\kappa_0+\kappa_1=0$ in $\Field_2$. (a) With $(\beta_0,\beta_1,\beta_2,\beta_3)=(2,3,0,1)$, both original-graph parities vanish, so the lifted schedule is ASC-feasible. (b) With $(\beta_0,\beta_1,\beta_2,\beta_3)=(1,0,3,2)$, both parities are odd, so the quotient accepts an assignment whose lift violates Definition~\ref{def:orbit-reduced-ASC}. The separator in each comparison vector groups $k=0,1$ for $\kappa_0$ and $k=2,3$ for $\kappa_1$.} 
  \label{fig:quotient-colorings_8_4_d2}
\end{figure*}

\section{Proof of Theorem~\ref{thm:orbit-invariant-schedule-lifting}}
\label{app:proof-orbit-lift}

Before proving Theorem~\ref{thm:orbit-invariant-schedule-lifting}, we present an example illustrating how constraints~\eqref{eq:properEC_orbit} and~\eqref{eq:q-ordering_orbit} in Definition~\ref{def:orbit-reduced-ASC} rule out quantum-invalid schedules.

\subsection{Proper Edge Coloring in \texorpdfstring{$\set{O}$}{O} is Not Enough}
\label{sec:NOT-enough_proper-edge-coloring}

Consider an $[[8,4,2]]$ code of $w=4$, $\Delta_{\tn{D}}=2$, with 
\begin{IEEEeqnarray*}{rCl}
  \mat{H}_\tn{X}=
  \begin{pmatrix}
    1&1&1&1&0&0&0&0
    \\
    0&0&0&0&1&1&1&1
  \end{pmatrix},
  \\
  \mat{H}_\tn{Z}=
  \begin{pmatrix}
    1&1&0&0&1&1&0&0
    \\
    0&0&1&1&0&0&1&1
  \end{pmatrix}.
\end{IEEEeqnarray*}  
The Tanner graph of this code has $\ecard{\Aut(\graph{G})}=64$ automorphisms. Fig.~\ref{fig:All-CSS-Tanner-graphs_8_4_d2}~\protect\subref{fig:CSS-Tanner-graph_8_4_d2} shows its standard CSS Tanner graph. We next consider an automorphism illustrating how treating the quotient as a new code collapses the distinct global parity constraints preserved by exact orbit substitution into a single aggregate parity constraint on the quotient graph.

\begin{description}[style=unboxed,leftmargin=0pt,labelsep=0.5em]
\item[Step 1:] \emph{Choose a locally injective action.} Consider the automorphism
  \begin{IEEEeqnarray}{c}
    g_2=(x_0x_1)(z_0z_1)(q_0q_6)(q_1q_7)(q_2q_4)(q_3q_5),\,
    \set{H}_2=\egen{g_2}.
    \label{eq:8_4_d2-good-rotation}\IEEEeqnarraynumspace
  \end{IEEEeqnarray}
  Every edge is paired with its image under $g_2$, so $16=2\times 8$ and $\ecard{\set{O}}=8$. Both endpoints of every edge move under $g_2$. Thus, the two edges in each orbit share neither endpoint. Hence no orbit contains two edges incident on the same vertex.
  
\item[Step 2:] \emph{Name the eight orbit variables.} The following table lists every orbit. The symbols $\alpha_\ell$ and $\beta_\ell$ are simply short names for the corresponding $\tau_o$.

\vspace{2ex}
  \begin{center}
    \begin{tabular}{cll}
      \toprule
      $\ell$ & Orbit with tick $\alpha_{\ell}$ & Orbit with tick $\beta_{\ell}$\\
      \midrule
      $0$ & $\{(x_0,q_0),(x_1,q_6)\}$ & $\{(z_0,q_0),(z_1,q_6)\}$\\
      $1$ & $\{(x_0,q_1),(x_1,q_7)\}$ & $\{(z_0,q_1),(z_1,q_7)\}$\\
      $2$ & $\{(x_0,q_2),(x_1,q_4)\}$ & $\{(z_1,q_2),(z_0,q_4)\}$\\
      $3$ & $\{(x_0,q_3),(x_1,q_5)\}$ & $\{(z_1,q_3),(z_0,q_5)\}$\\
      \bottomrule
    \end{tabular}
  \end{center}
  \vspace{2ex}
  
  For example, $t_{(x_0,q_2)}=t_{(x_1,q_4)}=\alpha_2$. Applying \eqref{eq:properEC_orbit} at all original vertices gives exactly
  \begin{IEEEeqnarray}{c}
    \alpha_l\ne\alpha_\ell,\,\beta_l\ne\beta_\ell\, (0\le l<\ell\le3),\,\alpha_{\ell}\ne\beta_{\ell},\, \ell\in [0:3].
    \IEEEeqnarraynumspace\label{eq:properEC_8_4_d2}
  \end{IEEEeqnarray}
  The first two families come from the check vertices and the third from the data qubits.
  
\item[Step 3:] \emph{Expand each original quantum-ordering constraint.} All constraints derived from~\eqref{eq:q-ordering_orbit} can be written as follows:
  \begin{IEEEeqnarray}{rCl}
    \eta_{00}(\vect{t})& = &\eI{t_{(x_0,q_0)}<t_{(z_0,q_0)}}+\eI{t_{(x_0,q_1)}<t_{(z_0,q_1)}}
    \nonumber\\
    & = &\eI{\alpha_0<\beta_0}+\eI{\alpha_1<\beta_1}\eqdef\kappa_0,
    \nonumber\\[1mm]
    \eta_{01}(\vect{t})& = &\eI{t_{(x_0,q_2)}<t_{(z_1,q_2)}}+\eI{t_{(x_0,q_3)}<t_{(z_1,q_3)}}
    \nonumber\\[1mm]
    & = &\eI{\alpha_2<\beta_2}+\eI{\alpha_3<\beta_3}\eqdef\kappa_1,
    \nonumber\\[1mm]
    \eta_{10}(\vect{t})& = &\eI{t_{(x_1,q_4)}<t_{(z_0,q_4)}}+\eI{t_{(x_1,q_5)}<t_{(z_0,q_5)}}
    \nonumber\\[1mm]
    & = &\eI{\alpha_2<\beta_2}+\eI{\alpha_3<\beta_3}=\kappa_1,
    \nonumber\\[1mm]
    \eta_{11}(\vect{t})& = &\eI{t_{(x_1,q_6)}<t_{(z_1,q_6)}}+\eI{t_{(x_1,q_7)}<t_{(z_1,q_7)}}
    \nonumber\\
    & = &\eI{\alpha_0<\beta_0}+\eI{\alpha_1<\beta_1}=\kappa_0.
  \end{IEEEeqnarray}
  Thus, \eqref{eq:q-ordering_orbit} requires $\kappa_0=0$, $\kappa_1=0$, $\kappa_1=0$, $\kappa_0=0$. Removing identical whole equations leaves
  \begin{equation}
    \kappa_0=0\quad\tn{and}\quad \kappa_1=0.
    \label{eq:8_4_d2-two-parities}
  \end{equation}
  After orbit substitution, the check pairs $(x_0,z_0)$ and $(x_1,z_1)$ yield the same parity constraint, as do $(x_0,z_1)$ and $(x_1,z_0)$.
  
\item[Step 4:] \emph{Compare with the constraint generated on the quotient.}
  The vertex quotient has one $X$ vertex, one $Z$ vertex, and four data vertices $\set{Q}_0=\{q_0,q_6\}$, $\set{Q}_1=\{q_1,q_7\}$, $\set{Q}_2=\{q_2,q_4\}$, and $\set{Q}_3=\{q_3,q_5\}$. Each $\set{Q}_{\ell}$ is incident on an $X$ edge with tick $\alpha_{\ell}$ and a $Z$ edge with tick $\beta_{\ell}$, as shown in Fig.~\ref{fig:All-CSS-Tanner-graphs_8_4_d2}~\protect\subref{fig:quotient-TG_8_4_d2}. The check-orbit vertices are $\set{X}=\{x_0,x_1\}$ and $\set{Z}=\{z_0,z_1\}$. Applying Definition~\ref{def:asc-constraints} directly to the quotient yields only one parity constraint for its single check pair:
  \begin{equation}
    \sum_{k=0}^3\eI{\alpha_k<\beta_k}=\kappa_0+\kappa_1=0\pmod2.
    \label{eq:8_4_d2-quotient-parity}
  \end{equation}
  This equation allows both $(\kappa_0,\kappa_1)=(0,0)$ and $(\kappa_0,\kappa_1)=(1,1)$, whereas Definition~\ref{def:orbit-reduced-ASC} allows only $(0,0)$.

\item[Step 5:] \emph{Exhibit one valid assignment and one false positive.} At $\lambda=4$, fix $(\alpha_0,\alpha_1,\alpha_2,\alpha_3)=(0,1,2,3)$. Both choices below satisfy~\eqref{eq:properEC_8_4_d2}:
  
  \vspace{2ex}
  \begin{center}
    \footnotesize
    \setlength{\tabcolsep}{3pt}
    \renewcommand{\arraystretch}{1.1}
    \begin{tabularx}{\linewidth}{@{}>{\centering\arraybackslash}p{0.23\linewidth}>{\centering\arraybackslash}p{0.27\linewidth}>{\centering\arraybackslash}p{0.12\linewidth}>{\centering\arraybackslash}X@{}}
      \toprule
      $(\beta_0,\beta_1,\beta_2,\beta_3)$
      & $(\eI{\alpha_k<\beta_k})_{k=0}^3$ & $(\kappa_0,\kappa_1)$ & Original graph\\
      \midrule
      $(2,3,0,1)$ & $(1,1,0,0)$ & $(0,0)$ & ASC-feasible
      \\
      $(1,0,3,2)$ & $(1,0,1,0)$ & $(1,1)$ & \shortstack{violates all four parities}
      \\
      \bottomrule
    \end{tabularx}
  \end{center}
  \vspace{2ex}
  
  The first choice yields a depth-optimal schedule since $w=4$. The second passes~\eqref{eq:8_4_d2-quotient-parity}, but each of its two separate odd parities violates~\eqref{eq:8_4_d2-two-parities}. This discrepancy is specific to the quotient-code construction and does not arise under the exact orbit-reduced formulation of Definition~\ref{def:orbit-reduced-ASC}.
\end{description}

\subsection{The Proof}
\label{sec:the-proof}

We prove the two directions separately.

\begin{description}[style=unboxed,leftmargin=0pt,labelsep=0.5em]
\item[Forward direction.] Let $\tau$ satisfy Definition~\ref{def:orbit-reduced-ASC} and set $t_e=\tau_{\pi(e)}$ for every original edge $e$.

First, all coordinates of $\vect{t}$ lie in $[0:\lambda-1]$. For any original vertex $v$ and any distinct $e,e'\in\delta(v)$, \eqref{eq:properEC_orbit} gives
\begin{equation*}
  t_e=\tau_{\pi(e)}\ne\tau_{\pi(e')}=t_{e'}.
\end{equation*}
This is exactly~\eqref{eq:properEC_ASC}, simultaneously at the check vertices and the data vertices.

Second, for each original pair $(x_i,z_j)$ with $\set{S}_{ij}\ne\emptyset$, coordinate substitution gives
\begin{IEEEeqnarray*}{rCl}
  \eta_{ij}(\vect{t})& = &\sum_{q\in\set{S}_{ij}}\eI{t_{(x_i,q)}<t_{(z_j,q)}}\pmod 2
  \\
  & = &\sum_{q\in\set{S}_{ij}}\eI{\tau_{\pi(x_i,q)}<\tau_{\pi(z_j,q)}}\pmod 2 = 0,
\end{IEEEeqnarray*}
where the last equality is~\eqref{eq:q-ordering_orbit} for this same pair. Thus, \eqref{eq:q-ordering_ASC} holds. By Definition~\ref{def:asc-constraints}, $\vect{t}$ is ASC-feasible.

Third, $\pi(h(e))=\pi(e)$ for every $h\in\set{H}$, so
\begin{equation*}
  t_{h(e)}=\tau_{\pi(h(e))}=\tau_{\pi(e)}=t_e.
\end{equation*}
Hence, the lifted schedule is $\set{H}$-invariant.

\item[Reverse direction.] Let $\vect{t}$ be an $\set{H}$-invariant schedule satisfying Definition~\ref{def:asc-constraints}. For each orbit $o$, choose any representative $e_o\in o$ and set $\tau_o\eqdef t_{e_o}$. Since $t_{e_o}\in [0:\lambda-1]$, every $\tau_o$ lies in the tick domain required by Definition~\ref{def:orbit-reduced-ASC}. If $e'_o\in o$ is another representative, then $e'_o=h(e_o)$ for some $h\in\set{H}$, and invariance gives $t_{e'_o}=t_{e_o}$. Thus $\tau_o$ is well defined and $t_e=\tau_{\pi(e)}$ for every edge.
  
  For every $v$ and distinct $e,e'\in\delta(v)$, \eqref{eq:properEC_ASC} now gives $\tau_{\pi(e)}=t_e\ne t_{e'}=\tau_{\pi(e')}$, which is \eqref{eq:properEC_orbit}. For each original overlapping check pair, the same term-by-term substitution into \eqref{eq:q-ordering_ASC} gives \eqref{eq:q-ordering_orbit}. Therefore $\tau$ satisfies Definition~\ref{def:orbit-reduced-ASC}. Its value on each orbit is forced by $t_{e_o}$, proving uniqueness. The two constructions are inverse.
\end{description}



\section{Proof of Proposition~\ref{prop:automorphisms-classical-Tan}}
\label{sec:proof_automorphisms-classical-Tan}

When $f$ is restricted to $\delta(v)$, the edges adjacent to a check node $v$ in the base graph $\graph{B}$, we can view it as a map
\begin{equation}
  \begin{tikzcd}
    [column sep=12.26mm]
    \set{V}_\tn{d}^\mat{H} \ar[r, "(\mu|_{\delta(v)})^{-1}", "\cong"{swap}] & \delta(v) \ar[r, "f|_{\delta(v)}"] & \delta(f(v)) \ar[r, "\mu|_{\delta(f(v))}", "\cong"{swap}] & \set{V}_\tn{d}^\mat{H}
  \end{tikzcd}
  \label{eq:local-var-map}
\end{equation}
on the variable nodes of $\graph{G}^\mat{H}$.
Recall that $((f(v), \sigma_v(t)), f(q))$ is an edge in $\graph{G}$ if and only if $(f(v), f(q)) \in \set{E}$ and $(\mu(f(v), f(q)), \sigma_v(t))\in \set{E}^\mat{H}$.
If we let $p = \mu|_{\delta(v)}(v, q)\in \set{V}_d^\mat{H}$, then (\ref{eq:local-var-map}) maps $p$ to $\mu(f(v), f(q))$. Furthermore, if $((v, t), q)$ is an edge in $\graph{G}$, then $(p, t) = (\mu(v, q), t)\in \set{E}^\mat{H}$, which is mapped by $\sigma_v \sqcup f|_{\delta(v)}$ to $(\mu(f(v), f(q)), \sigma_v(t))$.
Therefore, an edge $((v, t), q)$ in $\graph{G}$ is mapped to an edge by $\tilde{f}_\sigma$ if and only if
$\sigma_v \sqcup f|_{\delta(v)}$ is an automorphism on $\graph{G}^{\mat{H}}$ for all $v\in \set{V}_c$. 




\section{Proof of Corollary~\ref{cor:nice-automorphisms}}
\label{sec:proof_nice-automorphisms}

This is the special case where $\sigma$ is the identity. The graph automorphism $f$ is label-preserving precisely if and only if the map (\ref{eq:local-var-map}) is the identity for all $v\in \set{V}_{\tn{c}}$, and the identity is an automorphism of $\graph{G}^{\mat{H}}$.

\section{Proof of Theorem~\ref{thm:achievable-depth_qTanner-local-codes}}
\label{sec:proof_achievable-depth_qTanner-local-codes}

Let $\vect{t}_{\tn{A}}:\set{E}_{\graph{G}(\mat{G}_{\tn{A}}, \mat{H}_{\tn{A}})} \to [0:\lambda_1-1]$ and $\vect{t}_\tn{B}:\set{E}_{\graph{G}(\mat{G}_\tn{B}, \mat{H}_\tn{B})} \to [0:\lambda_2-1]$ be the given quantum-valid ordered edge colorings. 
Since the edges of ${\graph{G}(\mat{G}_\tn{A}\otimes \mat{G}_\tn{B}, \mat{H}_\tn{A} \otimes \mat{H}_\tn{B})}$ are precisely products of edges of ${\graph{G}(\mat{G}_\tn{A}, \mat{H}_\tn{A})}$ and ${\graph{G}(\mat{G}_\tn{B}, \mat{H}_\tn{B})}$, the product of the two colorings is on the form
\begin{equation*}
  \vect{t}_\tn{A}\times \vect{t}_\tn{B} : \set{E}_{\graph{G}(\mat{G}_\tn{A}\otimes \mat{G}_\tn{B}, \mat{H}_\tn{A} \otimes \mat{H}_\tn{B})} \to [0:\lambda_1-1]\times [0:\lambda_2-1].
\end{equation*}
Ordering $[0:\lambda_1-1]\times [0:\lambda_2-1]$ row-wise, meaning $(i,j)<(i',j')$ when $i<i'$ or $i=i'$ and $j<j'$, then $\vect{t}_\tn{A}\times \vect{t}_\tn{B}$ becomes an ASC-feasible schedule for  $\graph{G}(\mat{G}_\tn{A} \otimes \mat{G}_\tn{B}, \mat{H}_\tn{A} \otimes \mat{H}_\tn{B})$.

The quantum-ordering constraint~\eqref{eq:q-ordering_ASC} is satisfied for $\vect{t}_\tn{A}\times \vect{t}_\tn{B}$ also when restricting the variable nodes to any column/row of $\set{A}\times \set{B}$, since $\vect{t}_\tn{A}$ and $\vect{t}_\tn{B}$ both satisfy it. Therefore,~\eqref{eq:q-ordering_ASC} is also satisfied when lifting $\vect{t}_\tn{A}\times \vect{t}_\tn{B}$ to the quantum Tanner code: when the local view of a $\tn{X}$-check node and a $\tn{Z}$-check node have nonzero intersection in the Tanner graph, they come from neighboring vertices of $\set{X}$, meaning that the intersection is one or more rows or columns in both the local views.

Lastly, we need to check~\eqref{eq:properEC_ASC} to ensure that we have a proper edge coloring. By construction, each check node is connected only to edges of different colors. A variable node is connected to four check vertices, two $\tn{X}$-checks and two $\tn{Z}$-checks. If one of the $\tn{X}$-checks' color is given by the $(a, b)$-edge of $\mat{G}_\tn{A}\otimes \mat{G}_\tn{B}$, then the other $\tn{X}$-check is the $(a^{-1}, b^{-1})$-edge of $\mat{G}_\tn{A}\otimes \mat{G}_\tn{B}$ and the two $\tn{Z}$-checks are the $(a, b^{-1})$-edge and $(a^{-1}, b)$-edge of $\mat{H}_\tn{A} \otimes \mat{H}_\tn{B}$. The $\tn{X}$- and $\tn{Z}$-edges have different colors since $\vect{t}_\tn{A}$ and $\vect{t}_\tn{B}$ are both ASC-feasible. The two $\tn{X}$-edges share no colors as $b$-edges and $b^{-1}$-edges share no colors, and the same is true for the two $\tn{Z}$-edges.

\section{Performance of the $[[72,12,6]]$ BB Code Under Circuit-Level Noise}
\label{sec:LocalASC_versus_IBM_BB_72_12_d6}

In Fig.~\ref{fig:72_12_schedule-comparison}, we compare the memory experiment performance, as described in Section~\ref{sec:numerical-results}, of a syndrome-extraction schedule obtained from LocalASC to that of the official schedule for BB codes~\cite{Bravyi-etal24_1}. This is done by recreating the BB code circuits in \texttt{stim} and running memory experiments on both circuits, using BP+OSD-CS. However, we use fewer BP iterations ($1000$ vs $10000$) and a higher CS order ($10$ vs $1$), leading to differences in decoding performance for the same code compared with~\cite [Fig.~2]{Bravyi-etal24_1}.

\begin{figure}[t!]
  \centering
  \begin{tikzpicture}
    \begin{loglogaxis}[
      width=8.5cm,
      height=8cm,
      xmin=9e-5, xmax=0.01,
      ymin=1e-9, ymax=1e-1,
      xlabel={Physical error rate ($p$)},
      ylabel={LER per cycle ($p_\tn{L}$)},
      grid=both,
      grid style={dashed, draw=gray!40},      
      xtick={1e-4, 1e-3, 1e-2},
      ytick={1e-9, 1e-7, 1e-5, 1e-3, 1e-1},
      legend pos=south east,
      legend style={
        at={(0.99,0.01)},
        anchor=south east,
        font=\small,
        cells={anchor=west},
        draw=gray},      
      ]
      
      
      \addplot [
      color=green!70!black, 
      thick, mark=none,
      legend image post style={mark=diamond*, mark size=2.5pt}
      ] table [x=p, y=pL, col sep=comma] {plot_data/72_12_6_Bravyi_fit.csv};
      \addlegendentry{$[[72, 12, 6]]$ BB code schedule~\cite{Bravyi-etal24_1}}
      
      \addplot [
      color=green!70!black,
      only marks, forget plot,
      mark=diamond*, mark size=2.5pt,
      error bars/.cd,
      y dir=both, y explicit,
      error bar style={thick, solid}
      ] table [
      x=p, 
      y=total_pL, 
      y error=total_pL_err, 
      col sep=comma
      ]{plot_data/72_12_6_Bravyi_data.csv};
      
      \addplot [
      color=red, 
      thick, mark=none,
      legend image post style={mark=o, mark size=2.5pt}
      ] table [x=p, y=pL, col sep=comma] {plot_data/72_12_6_LocalASC_fit.csv};
      \addlegendentry{$[[72, 12, 6]]$ LocalASC}
      
      \addplot [
      color=red,
      only marks, forget plot,
      mark=o, mark size=2.5pt,
      error bars/.cd,
      y dir=both, y explicit,
      error bar style={thick, solid}
      ] table [
      x=p, 
      y=total_pL, 
      y error=total_pL_err, 
      col sep=comma
      ]{plot_data/72_12_6_LocalASC_data.csv};
      
    \end{loglogaxis}
  \end{tikzpicture}
  \vspace{-3ex}
  \caption{We produce $\tn{Z}$ and $\tn{X}$-memory circuits with each schedule. These are decoded using BP+OSD-CS10 with $1000$ BP iterations and at most $1\times10^6$ samples per noise level. For the extrapolation, we consider $d_{\tn{circ}} = 6$  from~\cite[Table 1]{Bravyi-etal24_1}}
  \label{fig:72_12_schedule-comparison}
\end{figure}


\IEEEtriggeratref{27}
\bibliographystyle{IEEEtran}
\bibliography{defshort1, biblioHY}

\end{document}